\providecommand{\U}[1]{\protect\rule{.1in}{.1in}}
\newtheorem{theorem}{Theorem}
\newtheorem{definition}{Definition}
\newtheorem{example}{Example}
\newtheorem{lemma}{Lemma}
\newtheorem{proposition}{Proposition}
\numberwithin{equation}{section}
\definecolor{MyDarkBlue}{rgb}{0,0.08,0.45}
\definecolor{cites}{HTML}{324b13}
\definecolor{links}{HTML}{1a663b}
\definecolor{MyLightMagenta}{cmyk}{0.1,0.8,0,0.1}
\begin{document}

\title{Matching to two sides}
\author{Chao Huang\thanks{Institute for Social and Economic Research, Nanjing Audit University. Email: huangchao916@163.com.}}
\date{}
\maketitle

\begin{abstract}
This paper studies a matching problem in which a group of agents cooperate with agents on two sides. In environments with either nontransferable or transferable utilities, we demonstrate that a stable outcome exists when cooperations exhibit same-side complementarity and cross-side substitutability. Our results apply to pick-side matching problems and membership competition in online duopoly markets.
\end{abstract}

\textit{Keywords}: matching; stability; complementarity; substitutability

\textit{JEL classification}: C62, C78, D47

\section{Introduction}\label{Sec_intro}

In real life, individuals often face choices between two competing sides. For example, 

$\bullet$ A person may join one of two rival political parties or factions.

$\bullet$ A consumer might subscribe to a membership service from one of two dominant online platforms.

$\bullet$ A nation might have to align with one of two dominant military alliances.

In some cases, an agent can only select one side. In others---such as purchasing memberships---buying both is feasible but wasteful if the services overlap. Notably, in all these scenarios, the alternatives are substitutes for the agents, yet the agents themselves are often complements for the organizations or platforms they interact with.

This paper investigates a matching market in which central agents cooperate with agents on two sides. Following the literature (e.g., \citealp{R84}; \citealp{HM05}), we model cooperation through bilateral contracts. We prove that a stable outcome exists when contracts are \textbf{same-side complementary} and \textbf{cross-side substitutable} for agents with non-transferable utilities (NTU). This framework presents the exact polar opposite of the trading-network models studied by \cite{O08} and \cite{HK12}, where contracts are same-side substitutable and cross-side complementary.

When one side of the market is empty, our model reduces to a 
two-sided many-to-many matching environment with purely complementary preferences—a special case of the framework proposed by \cite{RY20}. In that setting, a stable outcome can be derived by their one-sided Deferred Acceptance (DA) algorithm. For our setting, we show that a stable outcome can be found by running the one-sided DA procedure for the two sides alternately. 

We also show that a stable outcome exists when contracts are \textbf{same-side gross complements} and \textbf{cross-side gross substitutes} for agents with transferable utilities (TU). This setting is the polar opposite of several prior studies. \cite{SY06} examined an exchange economy with two groups of goods in which goods are within-group gross substitutes and cross-group gross complements. \cite{HKNOW13} and \cite{FJJT19} studied trading networks of sellers and buyers in which contracts are same-side gross substitutes and cross-side gross complements. We adapt the methodology of these works to our problem: We transform our market into the TU market of \cite{RY20} in which contracts are gross complements for all agents. An equilibrium of the modified market is guaranteed to exist and corresponds to an equilibrium of the original market.

Our result in the NTU market applies to a pick-side matching scenario where two competing organizations recruit from a shared pool of potential members. This framework captures real-world matching dynamics---whether in politics, the military, or business---between rival organizations and applicants. Organizations often do not view members as substitutes, as an organization barely replaces a member with another in real-life blocking processes.

Our result in the TU market applies to a duopoly membership market. For example, consider a Chinese consumer who rarely cooks and thus subscribes to a meal-delivery service from one of the two dominant platforms, Meituan or Ele.me. In this scenario, memberships function as complements rather than substitutes for the platforms, which benefit from strong economies of scale.

If each central agent is required to pick one side in the NTU market, the outcome produced by our algorithm is also setwise stable. Setwise stability---introduced by \cite{S99}---provides a criterion for many-to-many and multilateral matching that precludes blocking coalitions in which the members are more cooperative. See also \cite{EO06}, \cite{KW09}, \cite{BH21}, and \cite{H23b}, among others. 

This paper studies a matching environment incorporating both complements and substitutes, building on the studies we have discussed so far. A broad body of research addresses related topics. Stable outcomes or equilibria are guaranteed to exist in matching envrionments and exchange economies with indivisibilities under substitutability conditions that preclude complements (e.g., \citealp{KC82}; \citealp{RS90}; \citealp{GS99}; \citealp{HM05}; \citealp{FJJT19}; \citealp{PY23}). However, complements are prevalent in reality. For example, workers of different types are often complements for firms. Couples in job markets also cause complementarity (e.g., \citealp{KK05}; \citealp{KPR13}). \cite{EY07} and \cite{P12} examined complements and peer effects in job markets. 

Unit demands---an elementary type of substitutability---are incompatible with complements in settings without contract terms, as shown by \cite{GS99} and \cite{HK08}. Consequently, we can hardly expect general conditions imposed on individuals for the existence of stable outcomes/equilibria in problems like job matching or exchange economies with indivisible goods. The bilateral and unilateral substitutes conditions of \cite{HK10} relax substitutability in a setting with contracts and apply to cadet–branch matching (\citealp{S13}; \citealp{SS13}).  Some unimodularity conditions (\citealp{DKM01}; \citealp{BK19}; \citealp{H23a}) imposed on the structure of preference profiles are sufficient and allow complements. 

The literature has also explored approximate or near-feasible solutions. \cite{AWW13}, \cite{AH18}, and \cite{CKK19} demonstrated that approximate solutions exist in markets with a large number of participants. \cite{NV18,NV19,NV24} found that exact solutions exist in certain markets when parameters---such as quotas, constraints, or supplies---are slightly adjusted.

Another strand of research concerns general utility forms in economies with indivisible commodities and transfers; see \cite{FJJT19}, \cite{BEJK23}, and \cite{NV24}, among others. Our TU market result can be further extended to this framework following the analysis of \cite{RY25}.

The remainder of this paper is organized as follows. Section \ref{Sec_P} introduces the market structure and notation. Section \ref{Sec_NTU} presents the NTU model, its corresponding algorithm that finds a stable outcome, and a result on setwise stability. Section \ref{Sec_pick} applies this framework to pick-side matching. Section \ref{Sec_TU} presents the TU model and our existence result. Section \ref{Sec_membership} provides an application to membership competitions. All proofs are relegated to the Appendix.

\section{Preliminaries\label{Sec_P}}

There is a finite set $I$ of agents, which is partitioned into three subsets $I^L$, $I^M$, and $I^R$. We call agents from $I^L$, $I^M$, and $I^R$ the \textbf{left-side agents}, the \textbf{central agents}, and the \textbf{right-side agents}, respectively. There is a set of \textbf{left-side contracts} $X^L$ in which each contract $x\in X^L$ is signed by a left-side agent and a central agent, and a set of \textbf{right-side contracts} $X^R$ in which each contract $x\in X^R$ is signed by a central agent and a right-side agent. Let $X\equiv X^L\cup X^R$ be the set of all contracts. For any set of contracts $Y\subseteq X$, let $Y^L\equiv Y\cap X^L$ be the set of left-side contracts in $Y$, and let $Y^R\equiv Y\cap X^R$ be the set of right-side contracts in $Y$. 

For each contract $x\in X$, let $\mathrm{N}(x)$ be the set of the two participants of $x$. For each set of contracts $Y\subseteq X$ and each agent $i\in I$, let $Y_i\equiv\{x\in Y|i\in \mathrm{N}(x)\}$ be the subset of $Y$ in which each contract involves agent $i$. For sets of contracts $Y\subseteq X$, we write $\mathrm{N}(Y)\equiv\bigcup_{x\in Y}\mathrm{N}(x)$.

\section{Model with nontransferable utilities\label{Sec_NTU}}

\subsection{Setting}\label{Sec_setNTU}

In this section, we present the model in which agents have nontransferable utilities. In this setting, the contract set $X$ is finite. Each agent $i\in I$ has a strict preference ordering $\succ_i$ over $2^{X_i}$. For any $Y,Z\subseteq X_i$, we write $Y\succcurlyeq_iZ$ if $Y\succ_iZ$ or $Y=Z$. For each agent $i\in I$, let $\mathrm{Ch}_i:2^{X}\rightarrow 2^{X_i}$ be $i$'s \textbf{choice function} such that $\mathrm{Ch}_i(Y)\subseteq Y_i$ and $\mathrm{Ch}_i(Y)\succcurlyeq_iZ$ for each $Y\subseteq X$ and $Z\subseteq Y_i$. For any $Y\subseteq X$, let $\mathrm{Re}_i(Y)\equiv Y_i\setminus \mathrm{Ch}_i(Y)$ be agent $i$'s \textbf{rejection function}. An \textbf{outcome} in an NTU market is a set of contracts $Y\subseteq X$. 

An outcome $Y\subseteq X$ is \textbf{individually rational} for agent $i\in I$ if $Y_i=\mathrm{Ch}_i(Y)$, namely, agent $i$ does not want to unilaterally drop any contracts from $Y_i$. An outcome $Y\subseteq X$ is called individually rational if $Y_i=\mathrm{Ch}_i(Y)$ for all $i\in I$.

\begin{definition}\label{Def_NTUstable}
\normalfont
In an NTU market,
\begin{description}
\item[(\romannumeral1)] an outcome $Y\subseteq X$ is \textbf{blocked} by a nonempty $Z\subseteq X\setminus Y$ if $Z_i\subseteq \mathrm{Ch}_i(Y\cup Z)$ for all $i\in \mathrm{N}(Z)$;

\item[(\romannumeral2)] an outcome is \textbf{stable} if it is individually rational and cannot be blocked.
\end{description}
\end{definition}

A nonempty set $Z\subseteq X\setminus Y$ blocks an outcome $Y\subseteq X$ if the agents of $\mathrm{N}(Z)$ improve themselves by signing contracts of $Z$ and possibly dropping some contracts of $Y$. Notably, the newly signed contracts are in the participants' best choices from the newly signed contracts and the original contracts: $Z_i\subseteq \mathrm{Ch}_i(Y\cup Z)$ for all $i\in \mathrm{N}(Z)$.\footnote{This requirement for a block is in contrast to a notion of setwise stability; see Section \ref{Sec_setwise}.} This definition---proposed by \cite{HK12}---generalizes the stability concept in two-sided matching. It has been studied in \cite{BH21} and coincides with the one used in \cite{RY20}.

\begin{definition}
\normalfont
Contracts are \textbf{complementary} for an agent $i\in I$ if $Y\subseteq Y'\subseteq X_i$ implies 
    \begin{equation}\label{LRC}
    \mathrm{Ch}_i(Y)\subseteq \mathrm{Ch}_i(Y')
    \end{equation} 
\end{definition}

This condition requires that an agent's choice expands as more contracts become available, or equivalently, an agent does not want to substitute some contracts with any newly available contracts. In a setting that allows externalities and indifferences, \cite{RY20} showed that a stable outcome exists and can be found by a one-sided DA algorithm when contracts are complementary for all agents. We assume contracts are complementary for the left-side and right-side agents and make another assumption on central agents' preferences.

\begin{definition}\label{SCCS}
\normalfont
Contracts are \textbf{same-side complementary} and \textbf{cross-side substitutable} for a central agent $i\in I^M$ if for any $Y,Z\subseteq X_i$, $Y^L=Z^L$ and $Y^R\subseteq Z^R$ imply
\begin{equation}\label{MS1}
    {[\mathrm{Ch}_i(Y)]}^R\subseteq {[\mathrm{Ch}_i(Z)]}^R \text{ and } {[\mathrm{Ch}_i(Z)]}^L\subseteq {[\mathrm{Ch}_i(Y)]}^L,
\end{equation} 
and $Y^L\subseteq Z^L$ and $Y^R=Z^R$ imply
\begin{equation}\label{MS2}
    {[\mathrm{Ch}_i(Y)]}^L\subseteq {[\mathrm{Ch}_i(Z)]}^L \text{ and } {[\mathrm{Ch}_i(Z)]}^R\subseteq {[\mathrm{Ch}_i(Y)]}^R.
\end{equation}
\end{definition}

In words, the term (\ref{MS1}) means that if more right-side contracts become available, a central agent will choose more right-side contracts and less left-side contracts; the term (\ref{MS2}) means that if more left-side contracts become available, a central agent will choose more left-side contracts and less right-side contracts. 

When contracts are complementary for a left-side or right-side agent, we also say that contracts are same-side complementary for her, as there is only one side for her. Consequently, we can simply say that we assume contracts are same-side complementary and cross-side substitutable for all agents.

\subsection{Algorithm}\label{Sec_alg}

We describe an \textbf{alternate Deferred Acceptance} algorithm for our problem. This algorithm implements the one-sided DA (\citealp{RY20}) for the left-side market and the right-side market alternately.

As the algorithm begins, let the left-side and central agents choose from the left-side contracts. We remove the contracts rejected by anyone and let the left-side and central agents choose from the remaining contracts. We repeat this process until there are no rejections. This process is called the left-side DA. We then run the right-side DA: Let the central and right-side agents select from the contracts chosen by the left-side DA together with the right-side contracts. We remove the \emph{right-side} contracts rejected by anyone and let the central and right-side agents choose from the remaining contracts. We repeat this process until no right-side contracts are rejected, then turn to the left-side DA with the left-side contracts chosen at the end of the right-side DA. The algorithm repeats the left-side DA and the right-side DA alternately until no left-side contracts are removed by the left-side DA or the right-side DA. Formally, let $A^{1(1)}\equiv X^L$.

\bigskip

\textbf{Stage} $k, k\geq1$.

\textbf{The left-side DA, Step} $m,m\geq1$: Let the left-side and central agents choose from the set $A^{k(m)}$. If there are contracts rejected by any agent, i.e., the set 
\begin{equation*}
B^{k(m)}\equiv \bigcup_{i\in I^L\cup I^R}\mathrm{Re}_i(A^{k(m)})
\end{equation*}
is nonempty, and $A^{k(m+1)}\equiv A^{k(m)}\setminus B^{k(m)}\neq\emptyset$, then go to the next step of the left-side DA. Otherwise, the left-side DA terminates; let $A^k\equiv A^{k(m+1)}$. 

(i) If we are at Stage $k=1$, or there are contracts rejected in the left-side DA (i.e., $A^k\neq A^{k(1)})$, go to the right-side DA.

(ii) If we are at Stage $k\geq 2$, and no contracts are rejected in the left-side DA (i.e., $A^k=A^{k(1)})$, the algorithm terminates and output $A^k\cup D^{k-1}$.

\textbf{The right-side DA, Step} $n,n\geq1$: Let $D^{k(1)}\equiv X^R$ for all Stage $k$. Let the central and right-side agents choose from the set $A^k\cup D^{k(n)}$, and let
\begin{equation*}
E^{k(n)}\equiv \bigcup_{i\in I^M\cup I^R}\mathrm{Re}_i(A^k\cup D^{k(n)})
\end{equation*}
be the collection of contracts that have been rejected by some agent. If there are contracts rejected from $D^{k(n)}$, i.e., ${[E^{k(n)}]}^R\neq\emptyset$, and $D^{k(n+1)}\equiv D^{k(n)}\setminus E^{k(n)}\neq\emptyset$, then go to the next step of the right-side DA. Otherwise, the right-side DA terminates; let $A^{k+1(1)}\equiv A^k\setminus E^{k(n)}$ and $D^k\equiv D^{k(n+1)}$.

(i) If there are contracts rejected from $A^k$, i.e., ${[E^{k(n)}]}^L\neq\emptyset$, go to the left-side DA of the next stage.

(ii) If no contracts are rejected from $A^k$, i.e., ${[E^{k(n)}]}^L=\emptyset$ (and thus $E^{k(n)}=\emptyset$ since ${[E^{k(n)}]}^R=\emptyset$), the algorithm terminates and outputs $A^k\cup D^k$.

\bigskip

In this algorithm, the right-side contracts available in the first step of the right-side DA of all stages are always all of the right-side contracts of $X^R$. Moreover, in each step of the right-side DA, the left-side contracts available to central agents are holding constant (i.e., the contracts of $A^k$), thus the right-side DA is not a standard one-sided DA. 

\begin{example}\label{exam_main}
\normalfont
There are two left-side agents ($i_1^L$ and $i_2^L$), two central agents ($i_1^M$ and $i^M_2$), and two right-side agents ($i^R_1$ and $i^R_2$). There are six contracts ($x,y,z,u,v$, and $w$), which are represented by edges below. 
\begin{center}
\begin{tikzpicture}[scale=0.6]

  \node (n1) at (0,0) {$i^M_2$};
  \node (n2) at (0,3) {$i^M_1$};
  \node (n4) at (8,0) {$i^R_2$};
  \node (n5) at (8,3) {$i^R_1$};
  \node (n7) at (-8,0) {$i^L_2$};
  \node (n8) at (-8,3) {$i^L_1$};
  \draw [very thick](1,0)--(7,0)
        (1,0.3)--(7,2.7)
        (1,3)--(7,3)
        (-1,0)--(-7,0)
        (-1,2.7)--(-7,0.3)
        (-1,3)--(-7,3);
  \node (c1) at (-4,3.4) {$x$};
  \node (c2) at (-4.3,1.8) {$y$};
  \node (c4) at (-4,0.4) {$z$};
  \node (c6) at (4,3.4) {$u$};
  \node (c7) at (3.7,1.8) {$v$};
  \node (c8) at (4,0.4) {$w$};
\end{tikzpicture}
\end{center}
The agents have the preferences
\begin{align*}
&i^L_1: \{x\}\succ\emptyset,\qquad\qquad &i^L_2: \{y,z\}\succ\emptyset,\\
&i^M_1: \{x,y\}\succ\{u\}\succ\emptyset,\qquad\qquad &i^M_2: \{v,w\}\succ\{z\}\succ\emptyset,\\
&i^R_1: \{u,v\}\succ\{v\}\succ\emptyset,\qquad\qquad &i^R_2: \{w\}\succ\emptyset.
\end{align*}

\textbf{Stage} 1. Left-side DA: Step 1. The left-side and central agents choose from $A^{1(1)}=\{x,y,z\}$. No contract is rejected. The left-side DA terminates, and we have $A^1=\{x,y,z\}$. 

Right-side DA: Step 1. The central and right-side agents choose from $A^1\cup D^{1(1)}=\{x,y,z,u,v,w\}$. The agent $i^M_1$ rejects the contract $u$, and the agent $i^M_2$ rejects the contract $z$. We remove the rejected right-side contract $u$ from $D^{1(1)}=\{u,v,w\}$ and obtain $D^{1(2)}=\{v,w\}$.

Step 2. The central and right-side agents choose from $A^1\cup D^{1(2)}=\{x,y,z,v,w\}$. No contract of $D^{1(2)}$ is rejected, thus the right-side DA terminates, and we have $D^1=\{v,w\}$. As the contract $z$ of $A^1$ is rejected by the agent $i^M_2$, we have $A^{2(1)}=\{x,y\}$.
\medskip

\textbf{Stage} 2. Left-side DA: Step 1. The left-side and central agents choose from $A^{2(1)}=\{x,y\}$. The contract $y$ is rejected by the agent $i^L_2$. We have $A^{2(2)}=\{x\}$.

Step 2. The left-side and central agents choose from $A^{2(2)}=\{x\}$. The contract $x$ is rejected by the agent $i^M_1$. We have $A^{2(3)}=\emptyset$, thus the left-side DA terminates, and we have $A^2=\emptyset$.

Right-side DA: Step 1. The central and right-side agents choose from $A^2\cup D^{2(1)}=\{u,v,w\}$. No contract of $D^{2(1)}=\{u,v,w\}$ is rejected, thus the right-side DA terminates, and we have $D^2=\{u,v,w\}$. Since no contract of $A^2$ is rejected either, the algorithm terminates and outputs $\{u,v,w\}$.
\end{example}

Notice that the left-side contracts available for the left-side DA shrink from one stage to the next, as those left-side contracts not chosen at the end of the right-side DA become unavailable. Consequently, the set of left-side contracts that survive the left-side DA (i.e., the set $A^k$) also shrinks. Then, due to cross-side substitutability, the set of right-side contracts chosen by the right-side DA (i.e., the set $D^k$) expands from stage to stage (see Lemma \ref{lma_D} in the Appendix). Since the contracts are finite, the algorithm must eventually terminate.

Suppose the algorithm terminates at the right-side DA of Stage $s$, and the produced outcome $A^s\cup D^s$ is blocked by $Z\subseteq X\setminus(A^s\cup D^s)$. Since $A^k$ shrinks and $D^k$ expands from one stage to the next, the available left-side contracts in the algorithm always contain those of $A^s$, and the available right-side contracts in the last step of the right-side DA of each Stage (which are those of $D^k$) are always contained in $D^s$. Then, since $Z_i\subseteq\mathrm{Ch}_i(A^s\cup D^s\cup Z)$ for all $i\in \mathrm{N}(Z)$, those left-side contracts of $Z^L$ cannot be removed in the algorithm due to same-side complementarity and cross-side substitutability. Consequently, we have $Z^L=\emptyset$. In the final stage, given the left-side contracts of $A^s$ chosen in the left-side DA, agents do not regret rejecting any right-side contracts during the right-side DA; we thus have $Z^R=\emptyset$. We can achieve the same conclusion if the algorithm terminates after a left-side DA.

\begin{theorem}\label{thm_non}
\normalfont
The alternate DA algorithm finds a stable outcome when contracts are same-side complementary and cross-side substitutable for all agents.
\end{theorem}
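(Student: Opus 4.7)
The plan is to establish termination, individual rationality, and absence of blocks for the output, handling symmetrically the two cases in which the algorithm can halt (after a left-side DA or after a right-side DA). For concreteness assume the algorithm halts in a right-side DA of Stage $s$ and outputs $A^s \cup D^s$. I would first prove two monotonicity properties across stages: the surviving left-side sets shrink, $A^{k+1(1)} \subseteq A^k$, which is immediate from the algorithm, and the surviving right-side sets expand, $D^{k-1} \subseteq D^k$, which is the substance of Lemma \ref{lma_D}. The latter uses cross-side substitutability for central agents together with same-side complementarity for right-side agents: at Stage $k$ the left-side menu offered to central agents has shrunk relative to Stage $k-1$ while the right-side DA reboots with menu $X^R$, so condition (\ref{MS1}) together with right-side complementarity forces weakly fewer right-side rejections. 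Finiteness of $X$ combined with these monotonicities yields termination, and individual rationality at termination is immediate since no rejections occur in the final step on either side.

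The core of the argument is ruling out blocks. Suppose $Z \subseteq X \setminus (A^s \cup D^s)$ is nonempty with $Z_i \subseteq \mathrm{Ch}_i(A^s \cup D^s \cup Z)$ for every $i \in \mathrm{N}(Z)$. I would treat $Z^R$ and $Z^L$ separately. For $Z^R$: the final right-side DA started with the full menu $X^R$ alongside the fixed left-side set $A^s$ and produced no rejection at its last step, so any $x \in Z^R$ must have been rejected at some earlier step of that same DA. Letting $W$ denote the menu at that rejection, $W^L = A^s$ and $W^R$ contains $D^s$ augmented by $x$ and the as-yet-unrejected contracts. Applying condition (\ref{MS1}) for a central endpoint and same-side complementarity for a right-side endpoint, the hypothesis $x \in \mathrm{Ch}_i(A^s \cup D^s \cup Z)$ propagates to $x \in \mathrm{Ch}_i(W)$, contradicting the rejection.

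For $Z^L$: fix $x \in Z^L$ and locate the first step, over all stages $k \leq s$ and both DAs, at which $x$ is removed from the active left-side set. Monotonicity of $A^k$ and $D^k$ guarantees that the menu $W$ faced by the rejecting agent at that step satisfies $W^L \supseteq A^s \cup \{x\}$ and $W^R \subseteq D^s$. If the rejecting agent is left-side, same-side complementarity alone gives $\mathrm{Ch}_i(A^s \cup \{x\}) \subseteq \mathrm{Ch}_i(W)$, and $x \in \mathrm{Ch}_i(A^s \cup D^s \cup Z)$ contradicts the rejection. If the rejecting agent is central, I would chain condition (\ref{MS2}) with same-side complementarity: enlarging the left side from the block's menu to $W^L$ while shrinking the right side from the block's menu to $W^R$ can only expand the central agent's left-side choice set, so $x \in \mathrm{Ch}_i(A^s \cup D^s \cup Z)$ forces $x \in \mathrm{Ch}_i(W)$, again a contradiction. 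Hence $Z = \emptyset$ and the output is stable.

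The main obstacle I anticipate is the bookkeeping in the $Z^L$ step: isolating the first removal of each $x \in Z^L$ and verifying the two-sided inclusions on $W$ so that (\ref{MS2}) and same-side complementarity can be chained in the correct direction, using $W^L \supseteq A^s \cup \{x\}$ and $W^R \subseteq D^s$ in tandem rather than separately. Once these menu inclusions are aligned with the hypothetical block, the monotonicity of the choice functions closes the argument essentially mechanically.
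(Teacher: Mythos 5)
Your overall architecture matches the paper's: monotonicity of $A^k$ (shrinking) and $D^k$ (expanding) for termination, individual rationality at the final step, and first-rejection arguments that chain same-side complementarity with cross-side substitutability to kill $Z^L$ and $Z^R$. Your $Z^R$ step is a direct first-rejection argument inside the final right-side DA, whereas the paper routes it through a maximality lemma (Lemma \ref{Lma_IR}(ii): $D^s$ contains every $Y\subseteq X^R$ with $Y_i\subseteq\mathrm{Ch}_i(A^s\cup Y)$ for all $i$); these are equivalent, and your $Z^L$ step is essentially the paper's. Minor repairs you would need: take the \emph{first} element of $Z^R$ (resp.\ $Z^L$) to be rejected, so that the menu at that step actually contains all of $Z^R$ (resp.\ $Z^L$) and the monotone comparison with the block's menu $A^s\cup D^s\cup Z$ goes through; and the monotonicity $D^{k-1}\subseteq D^k$ rests on the cross-side part of (\ref{MS2}) (shrinking the left menu expands right-side choices), not on (\ref{MS1}).

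The genuine gap is your claim that the case where the algorithm halts after a left-side DA is handled ``symmetrically.'' It is not: the algorithm is asymmetric, since each right-side DA restarts from the full menu $X^R$ while each left-side DA starts from the shrunken $A^{k(1)}$. When the algorithm stops at the left-side DA of Stage $s$ (because $A^s=A^{s(1)}$), the output is $A^s\cup D^{s-1}$, and $D^{s-1}$ was produced by a right-side DA run against the \emph{larger} left menu $A^{s-1}\supseteq A^s$. Your $Z^R$ argument then has to compare the block menu $(A^s,\,D^{s-1}\cup Z^R)$ with the rejection-step menu $(A^{s-1},\,D^{s-1(n)})$: the right parts satisfy $D^{s-1}\cup Z^R\subseteq D^{s-1(n)}$, which helps via (\ref{MS1}), but the left parts satisfy $A^s\subseteq A^{s-1}$, and enlarging the left menu \emph{shrinks} right-side choices under (\ref{MS2}) --- the two inclusions push in opposite directions, so the propagation $x\in\mathrm{Ch}_i(A^s\cup D^{s-1}\cup Z)\Rightarrow x\in\mathrm{Ch}_i(A^{s-1}\cup D^{s-1(n)})$ does not follow. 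The paper closes exactly this hole with Lemma \ref{lma_extend}, which shows that forcing one more right-side DA against the smaller set $A^s$ reproduces $D^{s-1}$ step by step, so the no-blocking argument can be run with the left menu fixed at $A^s$. Your proposal is missing this reduction (or some substitute for it), and without it the left-side-termination case is not covered.
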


Since the left side and the right side are symmetric, one can swap the two sides in the algorithm to obtain a possibly different stable outcome. Swapping the two sides does not change the output in Example \ref{exam_main} but produces a different stable outcome in Example \ref{exam_org} of the Appendix.

\subsection{Setwise stability}\label{Sec_setwise}

Stable outcomes under Definition \ref{Def_NTUstable} are not immune to a certain type of renegotiations, which we show by an example below. In this section, we introduce the concept of setwise stability, which prevents such renegotiations. We then show that stability and setwise stability coincide when we further require that each central agent can only cooperate with agents on one side.

Consider a market with one left-side agent $i^L$, one central agent $i^M$, and one right-side agent $i^R$. The agents $i^L$ and $i^M$ can sign contracts $x$ and $y$; and the agents $i^M$ and $i^R$ can sign contract $z$. The agents have the preferences
\begin{equation}\label{exam_setwise}
\begin{aligned}
&i^L: \{x,y\}\succ\emptyset,\\
&i^M: \{x,z\}\succ\{x,y\}\succ\{x\}\succ\{z\}\succ\emptyset,\\
&i^R: \{z\}\succ\emptyset.
\end{aligned}
\end{equation}

The only substitute in this market is the replacement of the contract $y$ with the contract $z$ by the central agent $i^M$ when she also holds the contract $x$. The contracts are same-side complementary and cross-side substitutable for all agents. There exists a unique stable outcome $\{z\}$. However, when the central agent $i^M$ holds the contract $z$, she would rather sign the contracts $x$ and $y$ with the left-side agent $i^L$ and drop $z$. Notice that this is not a block under Definition \ref{Def_NTUstable}: The set $\{x,y\}$ is not $i^M$'s best choice from the original contract ($z$) and the newly signed contracts ($x$ and $y$). Therefore, in the renegotiation between $i^L$ and $i^M$, $i^M$ should promise $i^L$ not to drop any of the newly signed contracts. The notion of setwise stability proposed by \cite{S99} prevents such renegotiations.
\begin{definition}\label{Def_setwise}
\normalfont
In an NTU market,
\begin{description}
\item[(\romannumeral1)] an outcome $Y\subseteq X$ is \textbf{setwise blocked} by a nonempty $Z\subseteq X\setminus Y$ if there exists an outcome $Y^*\subseteq Y\cup Z$ such that $Z\subseteq Y^*$, $ Y^*_i\succ_iY_i$, and $Y^*$ is individually rational for all $i\in \mathrm{N}(Z)$;

\item[(\romannumeral2)] an outcome is \textbf{setwise stable} if it is individually rational and cannot be setwise blocked.
\end{description}
\end{definition}
A group of agents can implement a setwise block if they can renegotiate to a new outcome that is better and individually rational for all participants. In this new outcome, the agents involved in the renegotiation may not obtain their best choices from the newly signed contracts and the original contracts, as illustrated by market (\ref{exam_setwise}). The outcome $\{z\}$ of this market is setwise blocked by $\{x,y\}$. Setwise stability has been studied by \cite{EO06} and \cite{KW09} in many-to-many matching and by \cite{BH21} in multilateral matching. There is another distinction between the block requirements of the two stability concepts. When an outcome $Y$ is setwise blocked by $Z$, the agents of $\mathrm{N}(Z)$ should make consistent decisions on which contracts to drop such that the original outcome is brought into a new outcome. However, a block under Definition \ref{Def_NTUstable} may not satisfy this requirement, as shown by the following market.\footnote{This is a market in which every pair of agents can sign contracts.}
\begin{equation}\label{exam_stable}
i_1: \{x,y\}\succ\{x\}\succ\emptyset \qquad\qquad i_2: \{z\}\succ\{x\}\succ\emptyset \qquad\qquad i_3: \{y,z\}\succ\emptyset
\end{equation}
Suppose the agents $i_1$ and $i_2$ have signed the contract $x$; then, both agents want to sign a new contract with the agent $i_3$. However, after signing the new contracts, $i_2$ wishes to drop $x$, while $i_1$ does not. According to Definition \ref{Def_NTUstable}, the outcome $\{x\}$ is blocked by $\{y,z\}$. This is a block in which the participants make inconsistent decisions about which contracts to drop. However, the outcome $\{x\}$ is setwise stable, as setwise stability does not rule out such blocks.

From the two examples above, we can conclude that stability and setwise stability are independent concepts. Stability is a suitable solution criterion for economies in which agents make decisions independently, whereas setwise stability is better suited for environments in which agents are more cooperative.

No setwise stable outcome exists in market (\ref{exam_setwise}), demonstrating that a setwise stable outcome is not guaranteed even under conditions of same-side complementarity and cross-side substitutability. However, stability and setwise stability coincide when we further require that each central agent can only cooperate with agents on one side. We say that a central agent $i\in I^M$ \textbf{has to pick one side} if for any $Y\subseteq X_i$, $Y\succ_i\emptyset$ implies $Y\subseteq X^L_i$ or $Y\subseteq X^R_i$.

\begin{proposition}\label{prop_equi}
\normalfont
When contracts are same-side complementary for all agents\footnote{Contracts are same-side complementary for a central agent if the condition of Definition \ref{SCCS} only holds for the ``${[\mathrm{Ch}_i(Y)]}^R\subseteq {[\mathrm{Ch}_i(Z)]}^R $'' part of (\ref{MS1}) and the ``${[\mathrm{Ch}_i(Y)]}^L\subseteq {[\mathrm{Ch}_i(Z)]}^L $'' part of (\ref{MS2}).\label{foot_SC}} and each central agent has to pick one side, an outcome is stable if and only if it is setwise stable.
\end{proposition}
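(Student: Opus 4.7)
I would prove both directions by contrapositive.

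For stable $\Rightarrow$ setwise stable, I would show that any setwise block $(Z,Y^*)$ of $Y$ is itself a regular block with the same $Z$. Fix $i\in\mathrm{N}(Z)$; since $Z\subseteq Y^*$ and $Y^*_i=\mathrm{Ch}_i(Y^*)$ by the IR part of the setwise block, it is enough to check $\mathrm{Ch}_i(Y^*)\subseteq\mathrm{Ch}_i(Y\cup Z)$. For non-central $i$ this follows directly from complementarity, since $Y^*\subseteq Y\cup Z$. For central $i$, pick-side forces $\mathrm{Ch}_i(Y^*)$ onto a single side $s$, and in particular $Z_i\subseteq X^s_i$ with $Z_i\neq\emptyset$. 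Using $Y^*_i\succ_i Y_i\succcurlyeq_i\emptyset$ together with same-side complementarity on side $s$, I would show that the best $s$-side subset $B^s$ of $(Y\cup Z)^s_i$ strictly dominates both the best $(-s)$-side subset of $(Y\cup Z)^{-s}_i$ and $\emptyset$, so pick-side forces $\mathrm{Ch}_i(Y\cup Z)=B^s$; same-side monotonicity then gives $\mathrm{Ch}_i(Y^*)\subseteq B^s$.

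For setwise stable $\Rightarrow$ stable, I would assume a regular block $Z$ of $Y$ exists and build a setwise block, contradicting setwise stability. Writing $C_i=\mathrm{Ch}_i(Y\cup Z)$, I record two structural facts. First, by complementarity $Y_j\subseteq C_j$ for every non-central $j\in\mathrm{N}(Z)$, and combined with $C_j\subseteq Y_j\cup Z_j$ this yields $C_j=Y_j\cup Z_j$. Second, pick-side together with same-side complementarity implies that each central $i^M\in\mathrm{N}(Z)$ is either \emph{side-preserving} ($Y_{i^M}$ empty or on the same side as $C_{i^M}$), in which case $Y_{i^M}\subseteq C_{i^M}$ and hence $C_{i^M}=Y_{i^M}\cup Z_{i^M}$, or \emph{side-switching}, in which case pick-side puts $Y_{i^M}$ and $C_{i^M}$ on disjoint sides and so $C_{i^M}=Z_{i^M}$.

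If no central in $\mathrm{N}(Z)$ is side-switching, then $C_i=Y_i\cup Z_i$ for every $i\in\mathrm{N}(Z)$, so $Y^*:=Y\cup Z$ is individually rational for every agent (those in $\mathrm{N}(Z)$ because $(Y\cup Z)_i=C_i=\mathrm{Ch}_i(Y\cup Z)$, those outside $\mathrm{N}(Z)$ because they are unaffected) and strictly improves on $Y$ for every $i\in\mathrm{N}(Z)$; hence $(Z,Y\cup Z)$ is a setwise block. Otherwise I pick a side-switching $i^M_0$ and set $Z':=Z_{i^M_0}$, $Y^*:=(Y\cup Z_{i^M_0})\setminus Y_{i^M_0}$. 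Pick-side puts $Y_{i^M_0}$ on the opposite side from every non-central $j\in\mathrm{N}(Z')$, with two consequences: $Y^*_{i^M_0}=Z_{i^M_0}=C_{i^M_0}$ is automatically IR and strictly better than $Y_{i^M_0}$, and $Y^*_j=Y_j\cup Z'_j$ for each non-central $j\in\mathrm{N}(Z')$. The main obstacle is then verifying $\mathrm{Ch}_j(Y_j\cup Z'_j)=Y_j\cup Z'_j\succ_j Y_j$ for such $j$: complementarity alone does not preserve IR when one passes from $C_j=Y_j\cup Z_j$ to the subset $Y_j\cup Z'_j$. I would close this step by tracing $j$'s preference along the chain $Y_j\subseteq Y_j\cup Z'_j\subseteq C_j$ and exploiting the fact that any failure of IR or improvement at $j$ would either reduce to the side-preserving case applied to a smaller subset $Z''\subseteq Z'$ or directly yield a strictly smaller setwise block centered on a single contract of $Z'_j$; iterating this step must terminate and produce a setwise block, contradicting setwise stability of $Y$. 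This IR-preservation argument for the non-central endpoints is where I expect the proof to be most delicate.
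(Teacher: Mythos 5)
Your first direction (stable $\Rightarrow$ setwise stable) is correct; it is essentially the paper's argument run agent by agent, whereas the paper first reduces the setwise block to a one-sided set $Z^L$ or $Z^R$ and then makes the same observations. The genuine gap is in your second direction, in the side-switching case. Taking $Z'=Z_{i^M_0}$ discards the blocking contracts that a non-central agent $j\in\mathrm{N}(Z')$ holds with \emph{other} central agents, and under complementarity $j$ may want $Z_j$ only as a package, so individual rationality of $Y_j\cup Z'_j$ genuinely fails. Concretely: let $i_0,i_1$ be central, $\ell$ a left-side agent, $j$ a right-side agent; let $a$ be a contract between $\ell$ and $i_0$, $b$ between $i_0$ and $j$, $c$ between $i_1$ and $j$; and take preferences $\ell:\{a\}\succ\emptyset$; $i_0:\{b\}\succ\{a\}\succ\emptyset\succ\{a,b\}$; $i_1:\{c\}\succ\emptyset$; $j:\{b,c\}\succ\emptyset\succ\{b\}\succ\{c\}$. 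All hypotheses of the proposition hold, $Y=\{a\}$ is individually rational and blocked by $Z=\{b,c\}$, and $i_0$ is side-switching. Your construction gives $Z'=\{b\}$ and $Y^*=\{b\}$, but $\mathrm{Ch}_j(\{b\})=\emptyset$, so $Y^*$ is neither individually rational for $j$ nor preferred to $Y_j=\emptyset$; and since $Z'$ is a singleton, the iteration onto smaller subsets that you invoke terminates with nothing. The implication itself is true here ($\{b,c\}$ setwise blocks $\{a\}$ via $Y^*=\{b,c\}$), but your construction cannot find it.

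The repair is the paper's: restrict the block to an entire side rather than to one central agent, i.e., take $Z'=Z^s$ (the paper's Lemma \ref{lma_block} shows $Z^s$ is still a block). Then every non-central $j\in\mathrm{N}(Z^s)$ keeps all of $Z_j$, so $Y_j\cup Z'_j=Y_j\cup Z_j=\mathrm{Ch}_j(Y\cup Z)$ is individually rational and a strict improvement by your own first structural fact; every central agent $i\in\mathrm{N}(Z^s)$ has $C_i$ on side $s$ (apply pick-side to $\emptyset\neq Z^s_i\subseteq C_i$), so $C_i=Y^s_i\cup Z^s_i$ whether $i$ is side-preserving or side-switching; and the contracts dropped by side-switching central agents involve only $(-s)$-side agents, who lie outside $\mathrm{N}(Z^s)$ and are therefore irrelevant to the setwise-block test. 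Setting $Y^*=\bigcup_{i\in\mathrm{N}(Z^s)}C_i$ then yields a setwise block in one step, with no iteration and no case split.
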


This result and Theorem \ref{thm_non} imply that a stable outcome---which is also setwise stable---exists when contracts are same-side complementary for all agents and each central agent has to pick one side.

\section{Application: Pick-side matching}\label{Sec_pick}

A direct application of our result in the NTU market is a market in which agents choose to join one of two competing organizations---a scenario prevalent in politics, commerce, and military affairs. This problem is a special many-to-one matching market with two organizations on one side. It is well-established that a stable matching exists in such markets if applicants are substitutes for organizations (see \citealp{RS90}).  However, applicants are often complements rather than substitutes for organizations due to the following reasons:

(i) Students are usually substitutes for schools due to strict quotas. Organizations, by contrast, typically operate without such limitations.

(ii) The fundamental purpose of an organization is to combine individuals to leverage their complementary effects and interactions, which are at odds with substitutability. In particular, in real-life blocks, it is common for a firm to substitute one worker with another; however, it is rare that an organization replaces one member with another.

(iii) When two workers are qualified for the same position, they are substitutes for a firm, which will not hire both if one is sufficient. This is because a firm incurs a direct cost for each employee. Organizations, however, face no such salary constraints.

Concrete environments also exist in which members are complementary for organizations. In practice, organizations often establish entry standards for potential applicants. For example, Ukraine's application to join NATO was declined during the Russia-Ukraine conflict due to Article 10 of the NATO Treaty, which requires applicant states to resolve major territorial disputes before joining the alliance. Meanwhile, Finland and Sweden were admitted as they met all entry requirements. In this case, if NATO evaluates applicants based on individual merit without raising its standards as more applicants emerge, then applicants are not substitutes for NATO.

Another key consideration is member connectivity. Individuals form organizations to enhance coordination and interaction. Consequently, organizations naturally evaluate applicants based on their connectivity to existing or potential members, which creates a fundamental source of complementarity. For instance, a commercial organization might decline to recruit firm A due to its limited business connections with current members. However, the same organization might hire both firms A and B if B has strong ties to current members and A, in turn, has strong connections with B.

The alternate DA algorithm can be adapted into the following intuitive procedure in the matching between two organizations ($o_1$ and $o_2$) and potential new members.

\bigskip

The agents who accept the organization $o_1$ are initially matching to $o_1$; other agents are unmatched.

\textbf{Stage $k$, $k\geq1$}. The agents currently matching to $o_1$ propose to $o_1$; and $o_1$ chooses its favorite set from the proposals and rejects the rest. The algorithm terminates and output the two organizations' current choices if we are at Stage $k\geq 2$ and no agents are rejected by $o_1$. Otherwise, the agents who weakly prefer $o_2$ to their current positions propose to $o_2$; and $o_2$ chooses its favorite set from the proposals and rejects the rest. The algorithm terminates and output the two organizations' current choices if the agents chosen by $o_2$ contain no agent from $o_1$. Otherwise, go to the next stage.

\bigskip

When agents are complementary for the two organizations, the proposals to organization $o_1$ shrink while those to organization $o_2$ expand within the algorithm. We provide an example of the alternate DA algorithm for this market in Section \ref{Sec_exam} of the Appendix.

The following example, which involves three organizations ($o_1, o_2$, and $o_3$) and three agents ($i_1, i_2$, and $i_3$), demonstrates that our results for the NTU market do not extend to scenarios in which agents can choose among three or more sides. For this example, we adopt the notation of \cite{RS90}.

\begin{equation}\label{exam_in1}
\begin{aligned}
&o_1: \{i_1,i_2\}\succ\emptyset \qquad\qquad\qquad\qquad &i_1: &\quad o_1\succ o_3\\
&o_2: \{i_2,i_3\}\succ\emptyset \qquad\qquad\qquad\qquad &i_2: &\quad o_2\succ o_1\\
&o_3: \{i_1,i_3\}\succ\emptyset \qquad\qquad\qquad\qquad &i_3: &\quad o_3\succ o_2\\
\end{aligned}
\end{equation}
If the agents $i_1$ and $i_2$ join the organization $o_1$, leaving the agent $i_3$ unmatched, then $i_2$ would rather join the organization $o_2$ with $i_3$. Similarly, neither $\{i_2,i_3\}$ matching to $o_2$ nor $\{i_1,i_3\}$ matching to $o_3$ is stable.

\section{Model with transferable utilities}\label{Sec_TU}

\subsection{Setting}\label{Sec_setTU}

In this section, we present the model in which agents have transferable utilities. In this setting, each contract $x=(w,\mathbf{p}^w)\in X$ contains a \textbf{primitive contract} $w\in\Omega$ and a price vector $\mathbf{p}^w=(p^w_i,p^w_j)\in \mathbb{R}^{\mathrm{N}(x)}$ with $p^w_i+p^w_j=0$ where $\{i,j\}=\mathrm{N}(x)$. The primitive contract $w$ describes the contract's nonpecuniary part; and the price vector $\mathbf{p}^w=(p^w_i,p^w_j)$ specifies the monetary transfer from $i$ to $j$ (if $p^w_i\geq 0$) or from $j$ to $i$ (if $p^w_j\geq 0$). The set of primitive contracts $\Omega$ is finite. The primitive contract of a contract $x\in X$ is denoted as $\tau(x)$. For each set of contracts $Y\subseteq X$, the set of primitive contracts involved in $Y$ is also denoted as $\tau(Y)\equiv\{\tau(y)|y\in Y\}$. For each contract $x\in X$, its primitive contract $\tau(x)$ is associated with the set of its two participants $\mathrm{N}(\tau(x))\equiv \mathrm{N}(x)$. For each agent $i\in I$ and each set of primitive contracts $\Psi\subseteq\Omega$, we write $\Psi_i\equiv\{w\in \Psi|i\in \mathrm{N}(w)\}$ as the set of primitive contracts associated with agent $i$.

Each agent $i\in I$ has a \textbf{valuation} $\mathrm{v}_i:2^{\Omega_i}\rightarrow \mathbb{R}$ over subsets of primitive contracts signed by her. Agents' preferences over sets of contracts are quasilinear in transfers, and agents do not sign a primitive contract twice. Formally, each agent $i$'s utility of signing contracts from $Y\subseteq X_i$ is given by
\begin{equation*}
\mathrm{u}_i(Y)\equiv\left\{
\begin{aligned}
&\mathrm{v}_i(\tau(Y))-\sum_{(w,\mathbf{p}^w)\in Y}p^w_i,\quad &\text{ if }\tau(x)\neq\tau(x') \text{ for all } x,x'\in Y,\\
&-\infty,\quad &\text{otherwise}.
\end{aligned}
\right.
\end{equation*}
Agent $i$'s utility function induces her choice correspondence $\mathrm{Ch}_i : 2^X\rightrightarrows2^{X_i}$ defined by $\mathrm{Ch}_i(Y)\equiv\arg\max_S\{\mathrm{u}_i(S) \text{ s.t. } S\subseteq Y_i\}$. 

An \textbf{outcome} in a TU market is a set of contracts $Y\subseteq X$ in which different contracts are associated with different primitive contracts: $\tau(x)\neq\tau(x')$ for all $x,x'\in Y$. 
\begin{definition}\label{def_stable}
\normalfont
An outcome $Y\subseteq X$ in a TU market is \textbf{stable} if it is 
\begin{description}
  \item[(i)] \textbf{individually rational}: $Y_i\in \mathrm{Ch}_i(Y)$ for all $i\in I$, and
  \item[(ii)] \textbf{unblocked}: There is no $Z\subseteq X\setminus Y$ such that for all $i\in \mathrm{N}(Z)$, we have $Z_i\subseteq A^i$ and $\mathrm{u}_i(A^i)>\mathrm{u}_i(Y_i)$ for some $A^i\in \mathrm{Ch}_i(Y\cup Z)$.
\end{description}
\end{definition}
A set $Z$ blocks $Y$ if any agent $i\in \mathrm{N}(Z)$ holding the contracts of $Y_i$ is willing to sign all contracts from $Z_i$ while possibly dropping some contracts of $Y_i$. In particular, each participant of the block becomes strictly better off.\footnote{Since we assume transferable utilities, the stability concept is unaffected if we instead preclude blocks in which each participant is weakly better off and at least one of them becomes strictly better off.} The stability concept of Definition \ref{def_stable} is stronger than the one used in \cite{HKNOW13} and weaker than the one used in \cite{RY20}. The former's block concept requires $Z_i\subseteq A^i$ for all $A^i\in \mathrm{Ch}_i(Y\cup Z)$, which implies that each participant becomes strictly better off. The latter's does not require each participant to become strictly better off.

Let $E\equiv\{(w,i)|w\in\Omega\text{ and }i\in N(w)\}$ be the set of all primitive contract-participant pair. We use a price vector $\mathbf{p}$ in
\begin{equation*}
\mathbb{B}\equiv\{\mathbf{p}\in\mathbb{R}^E | p^w_i+p^w_j=0 \text{ with } i,j\in \mathrm{N}(w) \text{ for each } w\in \Omega\} 
\end{equation*}
to specify the prices in all primitive contracts. Given a price vector $\mathbf{p}\in\mathbb{B}$, we use $\mathbf{p}_i=(p^w_i)_{w\in\Omega_i}$ to denote the components of $\mathbf{p}$ associated with agent $i\in I$, and we also use $\mathbf{p}^w=(p^w_i)_{i\in\mathrm{N}(w)}$ to denote the components of $\mathbf{p}$ associated with primitive contract $w\in\Omega$. Agent $i$'s \textbf{demand correspondence} is given by
\begin{equation*}
\mathrm{D}_i(\textbf{p}_i)\equiv\arg\max_{\Psi\in\Omega_i}\{\mathrm{v}_i(\Psi)-\sum_{w\in\Psi}p^w_i\}.
\end{equation*}
A set of primitive contracts $\Phi\subseteq\Omega$ and a price vector $\mathbf{p}\in\mathbb{B}$ constitute a \textbf{competitive equilibrium} $(\Phi,\mathbf{p})$ if $\Phi_i\in \mathrm{D}_i(\mathbf{p}_i)$ for each $i\in I$. 
For any set of primitive contracts $\Phi\subseteq\Omega$ under price vector $\mathbf{p}\in\mathbb{B}$, let $\kappa(\Phi,\mathbf{p})\equiv\{(w,\mathbf{p}^w)|w\in\Phi\}$ be the corresponding set of contracts. We can use competitive equilibrium as an intermediate tool for studying stable outcomes.

\begin{lemma}\label{lma_stable}
\normalfont
If $(\Phi,\mathbf{p})$ is a competitive equilibrium, then $\kappa(\Phi,\mathbf{p})$ is a stable outcome.
\end{lemma}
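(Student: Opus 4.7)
The plan is to verify individual rationality directly from the demand condition, and to rule out blocks by a first-welfare-theorem-style argument that compares each agent's payoff under the equilibrium price vector $\mathbf{p}$ with the payoff obtainable under the block, sums the resulting strict inequalities across participants, and invokes budget balance to derive $0 > 0$.

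For individual rationality, note that $Y \equiv \kappa(\Phi,\mathbf{p})$ has $\tau(Y_i)=\Phi_i$ and $\mathrm{u}_i(Y_i)=\mathrm{v}_i(\Phi_i)-\sum_{w\in\Phi_i}p^w_i$; since $\Phi_i\in\mathrm{D}_i(\mathbf{p}_i)$ maximizes $\mathrm{v}_i(\Psi)-\sum_{w\in\Psi}p^w_i$ over all $\Psi\subseteq\Omega_i$, it \emph{a fortiori} maximizes $\mathrm{u}_i(S)$ over $S\subseteq Y_i$ (where the transfers are precisely the equilibrium ones), so $Y_i\in\mathrm{Ch}_i(Y)$. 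For the block part, suppose for contradiction that nonempty $Z\subseteq X\setminus Y$ blocks $Y$; for each $i\in\mathrm{N}(Z)$ fix $A^i\in\mathrm{Ch}_i(Y\cup Z)$ with $Z_i\subseteq A^i$ and $\mathrm{u}_i(A^i)>\mathrm{u}_i(Y_i)$. Finiteness of $\mathrm{u}_i(A^i)$ forces $\tau$ to be injective on $A^i$; set $\Psi_i\equiv\tau(A^i)$. For each primitive contract $w\in\tau(Z_i)$, the copy of $w$ lying in $A^i$ is then the one from $Z$ (write its transfer to $i$ as $q^w_i$), not the equilibrium copy. Optimality of $\Phi_i$ at prices $\mathbf{p}_i$ gives
\[
\mathrm{v}_i(\Phi_i)-\sum_{w\in\Phi_i}p^w_i \;\geq\; \mathrm{v}_i(\Psi_i)-\sum_{w\in\Psi_i}p^w_i,
\]
so the block inequality $\mathrm{u}_i(A^i)>\mathrm{u}_i(Y_i)$ rewrites, after cancelling $\mathrm{v}_i(\Psi_i)$ and the equilibrium transfers contributed by contracts of $A^i$ inherited from $Y$, as
\[
\sum_{w\in\tau(Z_i)}(p^w_i-q^w_i)\;>\;0.
\]

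Summing these strict inequalities over $i\in\mathrm{N}(Z)$ and reordering the double sum yields
\[
\sum_{w\in\tau(Z)}\sum_{i\in\mathrm{N}(w)}(p^w_i-q^w_i)\;>\;0,
\]
but each inner sum vanishes because $\mathbf{p}^w,\mathbf{q}^w\in\mathbb{B}$ force $\sum_{i\in\mathrm{N}(w)}p^w_i=\sum_{i\in\mathrm{N}(w)}q^w_i=0$, giving the contradiction $0>0$. The one delicate point is the bookkeeping inside each $A^i$: distinguishing primitive contracts drawn from $Y$ (whose transfers are the equilibrium $p^w_i$ and cancel when subtracted from $\sum_{w\in\Psi_i}p^w_i$) from those drawn from $Z$ (whose transfers are the block prices $q^w_i$), using that duplicates of the same primitive would make the utility $-\infty$. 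Once this separation is clean, everything reduces to budget balance on each primitive contract.
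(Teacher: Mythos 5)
Your proof is correct and follows essentially the same route as the paper's: individual rationality falls out of $\Phi_i\in\mathrm{D}_i(\mathbf{p}_i)$, and a block is ruled out by showing each participant must strictly gain on the transfer terms relative to the equilibrium prices (your inequality $\sum_{w\in\tau(Z_i)}(p^w_i-q^w_i)>0$ is exactly the paper's $\mathrm{t}_i(Z_i)<\mathrm{t}'_i(Z_i,\mathbf{p})$), which contradicts budget balance upon summing over $\mathrm{N}(Z)$. Your explicit bookkeeping of which copy of each primitive contract lies in $A^i$ just spells out what the paper compresses into ``she can obtain an even larger utility by signing the primitive contracts of $\tau(Z_i)$ at $\mathbf{p}$.''
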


The converse of this statement is not true: Given a stable outcome $Y$, there does not necessarily exists a price vector $\mathbf{p}\in\mathbb{B}$ consistent with $Y$ (i.e., $(w,\mathbf{p}^w)\in Y$ if $w\in\tau(Y)$) such that $(\tau(Y),\mathbf{p})$ is a competitive equilibrium; see Example 1 of \cite{HKNOW13}. 

\cite{RY20} showed that competitive equilibria exists under a gross complementarity condition that is equivalent to supermodularity.

\begin{definition}\label{def_GC}
\normalfont
Primitive contracts are \textbf{gross complements} for agent $i\in I$ if for any $\mathbf{p}_i\geq \mathbf{q}_i\in \mathbb{R}^{\Omega_i}$ such that $|\mathrm{D}_i(\mathbf{p}_i)|=|\mathrm{D}_i(\mathbf{q}_i)|=1$, for the unique $\Phi\in \mathrm{D}_i(\mathbf{p}_i)$ and $\Psi\in \mathrm{D}_i(\mathbf{q}_i)$, we have $\{w\in\Phi|p^w_i=q^w_i\}\subseteq\Psi$.
\end{definition}

Roughly speaking, gross complementarity means that the fall of prices of some primitive contracts does not decrease the demand for other primitive contracts. The following lemma summarizes conditions equivalent to this definition.

\begin{lemma}\label{lma_RGC}
\normalfont
The following statements are equivalent.
\begin{description}
  \item[(i)] Agent $i$'s valuation $\mathrm{v}_i$ is \textbf{supermodular}: $\mathrm{v}_i(\Phi)+\mathrm{v}_i(\Psi)\leq \mathrm{v}_i(\Phi\cup \Psi)+\mathrm{v}_i(\Phi\cap \Psi)$ for any $\Phi,\Psi\subseteq \Omega_i$.
  \item[(ii)] Agent $i$'s demand correspondence $\mathrm{D}_i$ is \textbf{antitone}: for any price vectors $\mathbf{p}_i\geq \mathbf{q}_i\in \mathbb{R}^{\Omega_i}$, $\Phi\in \mathrm{D}_i(\mathbf{p}_i)$, and $\Psi\in \mathrm{D}_i(\mathbf{q}_i)$, we have $\Phi\cap\Psi\in \mathrm{D}_i(\mathbf{p}_i)$ and $\Phi\cup\Psi\in \mathrm{D}_i(\mathbf{q}_i)$.
  \item[(iii)] For any $\mathbf{p}_i\geq \mathbf{q}_i\in \mathbb{R}^{\Omega_i}$ and $\Phi\in \mathrm{D}_i(\mathbf{p}_i)$ of agent $i\in I$, there exists $\Psi\in \mathrm{D}_i(\mathbf{q}_i)$ such that $\{w\in\Phi|p^w_i=q^w_i\}\subseteq\Psi$.      
  \item[(iv)] Contracts are gross complements for agent $i\in I$. 
  \item[(v)] For any $\mathbf{p}_i\geq \mathbf{q}_i\in \mathbb{R}^{\Omega_i}$ and $\Psi\in \mathrm{D}_i(\mathbf{q}_i)$ of agent $i\in I$, there exists $\Phi\in \mathrm{D}_i(\mathbf{p}_i)$ such that $\{w\in\Phi|p^w_i=q^w_i\}\subseteq\Psi$.
\end{description}
\end{lemma}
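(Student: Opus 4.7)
The plan is to prove the five statements equivalent via the cycle (i) $\Rightarrow$ (ii) $\Rightarrow$ (iii) $\Rightarrow$ (iv) $\Rightarrow$ (i), together with the two short detours (ii) $\Rightarrow$ (v) and (v) $\Rightarrow$ (iv) that fold (v) into the picture. All of these except the last step of the cycle are direct manipulations of the definitions; the substance of the lemma lies in deducing supermodularity from the very weak gross-complement condition (iv).

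For (i) $\Rightarrow$ (ii), I fix $\mathbf{p}_i \geq \mathbf{q}_i$, $\Phi \in \mathrm{D}_i(\mathbf{p}_i)$, and $\Psi \in \mathrm{D}_i(\mathbf{q}_i)$, and add the revealed-preference inequalities comparing $\Phi$ to $\Phi \cap \Psi$ at $\mathbf{p}_i$ and $\Psi$ to $\Phi \cup \Psi$ at $\mathbf{q}_i$. The price contributions collapse to $\sum_{w \in \Phi \setminus \Psi}(p^w_i - q^w_i) \geq 0$ on one side, while the valuation contributions give $\mathrm{v}_i(\Phi) + \mathrm{v}_i(\Psi) - \mathrm{v}_i(\Phi \cap \Psi) - \mathrm{v}_i(\Phi \cup \Psi) \leq 0$ by supermodularity. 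Both inequalities must be equalities, which forces $\Phi \cap \Psi \in \mathrm{D}_i(\mathbf{p}_i)$ and $\Phi \cup \Psi \in \mathrm{D}_i(\mathbf{q}_i)$. For (ii) $\Rightarrow$ (iii), pick any $\Psi_0 \in \mathrm{D}_i(\mathbf{q}_i)$ and note that $\Phi \cup \Psi_0 \in \mathrm{D}_i(\mathbf{q}_i)$ contains $\Phi$, hence contains $\{w \in \Phi \mid p^w_i = q^w_i\}$. For (ii) $\Rightarrow$ (v), pick $\Phi_0 \in \mathrm{D}_i(\mathbf{p}_i)$ and observe that $\Phi_0 \cap \Psi \in \mathrm{D}_i(\mathbf{p}_i)$ is already a subset of $\Psi$. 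The implications (iii) $\Rightarrow$ (iv) and (v) $\Rightarrow$ (iv) reduce to noting that the existentially quantified set is uniquely determined when the demand sets are singletons.

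The nontrivial step is (iv) $\Rightarrow$ (i), which I prove by contrapositive. The negation of supermodularity, by a standard chain telescoping, yields $A \subseteq \Omega_i$ and distinct $u, w \notin A$ with $\beta := \mathrm{v}_i(A \cup \{u, w\}) - \mathrm{v}_i(A \cup \{u\}) < \mathrm{v}_i(A \cup \{w\}) - \mathrm{v}_i(A) =: \alpha$. Introducing $\gamma := \mathrm{v}_i(A \cup \{u\}) - \mathrm{v}_i(A)$ and $\delta := \mathrm{v}_i(A \cup \{u, w\}) - \mathrm{v}_i(A \cup \{w\})$, the identity $\alpha + \delta = \beta + \gamma$ gives $\delta < \gamma$. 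I then build $\mathbf{p}_i$ and $\mathbf{q}_i$ that agree on every coordinate except $u$: prices on $\Omega_i \setminus (A \cup \{u, w\})$ are set prohibitively large so those contracts are never chosen; prices on $A$ are set strongly negative so $A$ is always included in any optimal set; and $p^w_i = q^w_i = c$ is chosen strictly inside $(\beta, \alpha)$. Setting $p^u_i$ slightly above $\gamma$ and $q^u_i$ slightly below $\gamma - \alpha + c$, a direct check over the four candidate sets $A$, $A \cup \{w\}$, $A \cup \{u\}$, $A \cup \{u, w\}$ shows that the unique demand at $\mathbf{p}_i$ is $A \cup \{w\}$ and the unique demand at $\mathbf{q}_i$ is $A \cup \{u\}$. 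Since $w \in A \cup \{w\}$ and $p^w_i = q^w_i$ but $w \notin A \cup \{u\}$, condition (iv) fails.

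The main obstacle is coordinating the many pairwise optimality comparisons at both price vectors and ensuring uniqueness of the demands. The crucial slack comes from the failure of supermodularity itself: $\beta < \alpha$ makes the interval for $c$ nonempty, and $\delta < \gamma$ guarantees that a single threshold on $p^u_i$ dominates all the comparisons at $\mathbf{p}_i$. A generic perturbation of the strongly negative and prohibitively large prices removes any remaining ties.
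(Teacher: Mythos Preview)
Your proposal is correct and follows essentially the same strategy as the paper. The paper cites external references for (i)$\Leftrightarrow$(ii) and (ii)$\Leftrightarrow$(iii) rather than proving them directly as you do, and its (iv)$\Rightarrow$(i) construction takes the failure of increasing differences between sets $\Phi\subset\Psi$ that may differ by more than one element (shifting all prices on $\Psi\setminus\Phi$ from $+H$ to $-H$, with the witness being the single added element $w'$ whose price sits strictly between the two marginals), but the underlying idea---extreme prices to force singleton demands, with the marginal-value gap supplying the room for the critical price---is identical to yours.
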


\citet[Lemma 2]{RY20} showed (i)$\Leftrightarrow$(ii), and \citet[Theorem 1]{Y23} showed (ii)$\Leftrightarrow$(iii).
Conditions (iii) and (v) are symmetric, and each of them immediately implies (iv). It is also obvious that (iii) and (v) follow from (ii): For any $\mathbf{p}_i\geq \mathbf{q}_i\in \mathbb{R}^{\Omega_i}$, $\Phi\in \mathrm{D}_i(\mathbf{p}_i)$, and $\Psi\in \mathrm{D}_i(\mathbf{q}_i)$, we have $\Phi\cup\Psi\in \mathrm{D}_i(\mathbf{q}_i)$ satisfying $\{w\in\Phi|p^w_i=q^w_i\}\subseteq\Phi\cup\Psi$, and we have $\Phi\cap\Psi\in \mathrm{D}_i(\mathbf{p}_i)$ satisfying $\{w\in\Phi\cap\Psi|p^w_i=q^w_i\}\subseteq\Psi$. We complete the proof by showing (iv)$\Rightarrow$(i) in the Appendix.

We assume contracts are gross complements for all left-side and right-side agents. We make another assumption on central agents' preferences.

\begin{definition}\label{def_SSGC}
\normalfont
Primitive contracts are \textbf{same-side gross complements} and \textbf{cross-side gross substitutes} for a central agent $i\in I^M$ if 
\begin{description}
  \item[(i)]  for any price $\mathbf{p}_i,\mathbf{q}_i\in \mathbb{R}^{\Omega_i}$ such that such that $|\mathrm{D}_i(\mathbf{p}_i)|=|\mathrm{D}_i(\mathbf{q}_i)|=1$, $p^w_i=q^w_i$ for all $w\in\Omega_i^L$, and $p^w_i\geq q^w_i$ for all $w\in\Omega_i^R$, for the unique $\Phi\in \mathrm{D}_i(\mathbf{p}_i)$ and $\Psi\in \mathrm{D}_i(\mathbf{q}_i)$, we have $\{w\in\Phi^R|p^w_i=q^w_i\}\subseteq\Psi^R$ and $\Psi^L\subseteq\Phi^L$; and
    \item[(ii)]  for any price $\mathbf{p}_i,\mathbf{q}_i\in \mathbb{R}^{\Omega_i}$ such that such that $|\mathrm{D}_i(\mathbf{p}_i)|=|\mathrm{D}_i(\mathbf{q}_i)|=1$, $p^w_i=q^w_i$ for all $w\in\Omega_i^R$, and $p^w_i\geq q^w_i$ for all $w\in\Omega_i^L$, for the unique $\Phi\in \mathrm{D}_i(\mathbf{p}_i)$ and $\Psi\in \mathrm{D}_i(\mathbf{q}_i)$, we have $\{w\in\Phi^L|p^w_i=q^w_i\}\subseteq\Psi^L$ and $\Psi^R\subseteq\Phi^R$. 
\end{description}
\end{definition}

This condition states that for a central agent, a price decrease for some primitive contracts on one side will not reduce her demand for other primitive contracts on the same side, but will reduce her demand for primitive contracts on the opposite side. One can also define this condition as different forms of full substitutability given by \citet[Theorem A.1]{HKNOW19}. We assume contracts are same-side gross complements and cross-side gross substitutes for all central agents. 

Again, since agents on the left or right side only interact with one side of the market, we say that contracts are same-side gross complements for a left-side or right-side agent if contracts are gross complements for her. Consequently, we can simply say that we assume contracts are same-side gross complements and cross-side gross substitutes for all agents. In line with the literature, this assumption may also be termed \textbf{full complementarity}. 

\subsection{Transformation}\label{Sec_tran}

We prove the existence of a stable outcome following the method used in \cite{SY06}, \cite{HKNOW13}, and \cite{FJJT19}. We transform the original market into a new one by changing the valuation of each central agent $i\in I^M$ into
\begin{equation*}
\mathrm{\tilde{v}}_i(\Psi)\equiv\mathrm{v}_i(\Psi^L\cup(\Omega_i^R\setminus \Psi^R)).
\end{equation*}
for each $\Psi\subseteq \Omega_i$ and changing the valuation of each right-side agent $j\in I^R$ into
\begin{equation*}
\mathrm{\tilde{v}}_j(\Phi)\equiv\mathrm{v}_j(\Omega_j\setminus \Phi).
\end{equation*}
for each $\Phi\subseteq \Omega_j$. A central agent's value on a set of primitive contracts is transformed into her value on the set of the left-side primitive contracts she signs and the right-side primitive contracts she does not sign. A right-side agent's value on a set of primitive contracts is transformed into her value on the set of the primitive contracts she does not sign.

\begin{lemma}\label{lma_condition}
\normalfont
\begin{description}
  \item[(i)] If primitive contracts are gross complements for a right-side agent $i\in I^R$ in the original market, then primitive contracts are gross complements for $i$ in the modified market.
  \item[(ii)] If primitive contracts are same-side gross complements and cross-side gross substitutes for a central agent $i\in I^M$ in the original market, then primitive contracts are gross complements for $i$ in the modified market.  
\end{description}
\end{lemma}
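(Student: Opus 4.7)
The plan is to exploit Lemma \ref{lma_RGC}: on the hypothesis side, $\mathrm{v}_i$'s gross complements (or, for part (ii), same-side gross complements and cross-side gross substitutes) can be read in any of the equivalent forms listed there; on the conclusion side, it suffices to verify any one equivalent form of gross complements for $\mathrm{\tilde{v}}_i$. Part (i) is then a De Morgan calculation. By Lemma \ref{lma_RGC}, $\mathrm{v}_i$ is supermodular; substituting $A=\Omega_i\setminus\Phi$ and $B=\Omega_i\setminus\Phi'$ and using $\Omega_i\setminus(\Phi\cap\Phi')=A\cup B$ and $\Omega_i\setminus(\Phi\cup\Phi')=A\cap B$ turns the supermodular inequality for $\mathrm{v}_i$ into the supermodular inequality for $\mathrm{\tilde{v}}_i(\Phi)=\mathrm{v}_i(\Omega_i\setminus\Phi)$, and Lemma \ref{lma_RGC} delivers gross complements for $\mathrm{\tilde{v}}_i$.

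For part (ii), I would verify the demand characterization Lemma \ref{lma_RGC}(iii) for $\mathrm{\tilde{v}}_i$. A direct computation, using that $\sum_{w\in\Omega_i^R}p_i^w$ is an additive constant in the optimization, shows that $\Psi$ lies in the modified demand at $\mathbf{p}_i$ if and only if $\Psi^L\cup(\Omega_i^R\setminus\Psi^R)\in\mathrm{D}_i(\tilde{\mathbf{p}}_i)$, where $\tilde{\mathbf{p}}_i$ coincides with $\mathbf{p}_i$ on $\Omega_i^L$ and equals $-\mathbf{p}_i$ on $\Omega_i^R$. Under this bijection, a price decrease $\mathbf{p}_i\geq\mathbf{q}_i$ in the modified market becomes an $L$-side decrease combined with an $R$-side increase in the original. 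I would decompose this through an intermediate price $\tilde{\mathbf{r}}_i$ that equals $\tilde{\mathbf{q}}_i$ on $\Omega_i^L$ and $\tilde{\mathbf{p}}_i$ on $\Omega_i^R$, and let $\Phi_\mathbf{p},\Phi_\mathbf{r},\Phi_\mathbf{q}$ denote the demands of $\mathrm{v}_i$ at $\tilde{\mathbf{p}}_i,\tilde{\mathbf{r}}_i,\tilde{\mathbf{q}}_i$. The move $\tilde{\mathbf{p}}_i\to\tilde{\mathbf{r}}_i$ is a pure $L$-side price decrease, so Definition \ref{def_SSGC}(ii) gives $\{w\in\Phi_\mathbf{p}^L|p_i^w=q_i^w\}\subseteq\Phi_\mathbf{r}^L$ and $\Phi_\mathbf{r}^R\subseteq\Phi_\mathbf{p}^R$. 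The move $\tilde{\mathbf{r}}_i\to\tilde{\mathbf{q}}_i$ is a pure $R$-side price increase, so Definition \ref{def_SSGC}(i), with $\tilde{\mathbf{q}}_i$ playing the role of the higher-priced vector, gives $\Phi_\mathbf{r}^L\subseteq\Phi_\mathbf{q}^L$ and $\{w\in\Phi_\mathbf{q}^R|p_i^w=q_i^w\}\subseteq\Phi_\mathbf{r}^R$. Chaining through $\Phi_\mathbf{r}$ on each side and translating back through the bijection gives exactly the containment required by Lemma \ref{lma_RGC}(iii) for $\mathrm{\tilde{v}}_i$.

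The main obstacle is the single-valued-demand hypothesis in Definition \ref{def_SSGC}: uniqueness at $\mathbf{p}_i$ and $\mathbf{q}_i$ transfers to $\tilde{\mathbf{p}}_i$ and $\tilde{\mathbf{q}}_i$, but may fail at the intermediate $\tilde{\mathbf{r}}_i$. I would handle this by perturbing $\tilde{\mathbf{r}}_i$ to a generic nearby price at which demand is single-valued, running the chain above, and passing to the limit using upper hemicontinuity of $\mathrm{D}_i$ together with the finiteness of $2^{\Omega_i}$. A cleaner alternative is to first derive from Definition \ref{def_SSGC} three conditions on $\mathrm{v}_i$---namely that $\mathrm{v}_i(A\cup B)$ is supermodular in $A\subseteq\Omega_i^L$ for each fixed $B\subseteq\Omega_i^R$, supermodular in $B$ for each fixed $A$, and has decreasing cross-differences across $A$ and $B$---and then verify supermodularity of $\mathrm{\tilde{v}}_i$ by the same De Morgan manipulation as in part (i) applied only to the $R$-coordinates, combined with the decreasing-differences condition.
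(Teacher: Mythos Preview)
Your argument for part (i) via supermodularity is correct and is actually cleaner than the paper's: the paper instead translates modified demands to original demands (the identity $\Psi\in\tilde{\mathrm{D}}_i(\mathrm{g}(\mathbf{p}_i))\Leftrightarrow\Omega_i\setminus\Psi\in\mathrm{D}_i(\mathbf{p}_i)$) and verifies form (v) of Lemma \ref{lma_RGC}, whereas your De Morgan computation on $\mathrm{v}_i$ handles it in two lines.

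For part (ii), your core strategy---the demand bijection, the decomposition through an intermediate price at which only one side of prices differs from each endpoint, and a perturbation to force single-valued demand at the intermediate---is exactly the paper's. (The paper routes through the other intermediate $(p^L,-q^R)$ rather than your $(q^L,-p^R)$, but this is immaterial.) The gap is in your description of the perturbation. If you perturb only $\tilde{\mathbf{r}}_i$ to a generic nearby price, the moves $\tilde{\mathbf{p}}_i\to\tilde{\mathbf{r}}_i'$ and $\tilde{\mathbf{r}}_i'\to\tilde{\mathbf{q}}_i$ are no longer pure one-sided price changes, so Definition \ref{def_SSGC} does not apply to them and the chain cannot be ``run.'' Passing to a limit does not repair this, because there is nothing to take a limit of: at no stage of the approximation do you obtain the desired containments. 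The paper's fix is to first pick an arbitrary $\hat{\Phi}$ in the (possibly multi-valued) demand at the intermediate, then perturb the intermediate \emph{toward} $\hat{\Phi}$ (lower prices of $w\in\hat{\Phi}$, raise prices of $w\notin\hat{\Phi}$, each by $\epsilon$), and simultaneously perturb the endpoints $\mathbf{p}_i,\mathbf{q}_i$ by the same pattern. This coordinated perturbation keeps the endpoints single-valued, makes the intermediate single-valued with value $\hat{\Phi}$, and---crucially---preserves the one-sided structure and the equality set $\{w:p_i^w=q_i^w\}$ between each perturbed endpoint and the perturbed intermediate.

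Your ``cleaner alternative'' is a genuinely different and valid route. The three conditions you list (L-supermodularity for fixed $B$, R-supermodularity for fixed $A$, decreasing cross-differences) do follow from Definition \ref{def_SSGC} by the same extreme-price argument used in Lemma \ref{lma_RGC} (iv)$\Rightarrow$(i): force the ``fixed'' side by setting its prices to $\pm H$, and the remaining freedom reduces Definition \ref{def_SSGC} to the appropriate one-variable condition. Once those three conditions are in hand, the pairwise-increasing-differences characterization of supermodularity for $\tilde{\mathrm{v}}_i$ splits into exactly the three cases you describe. This route avoids the perturbation entirely, at the cost of having to re-derive the valuation-level characterization of Definition \ref{def_SSGC} that the paper does not state.
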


Let $\mathrm{g}$ be an operator on a price vector in $\mathbb{B}$ that reverses the directions of the transfers of all right-side contracts: For every $\mathbf{p}\in \mathbb{B}$, let $\mathrm{g}(\mathbf{p})\in\mathbb{B}$ be the price vector whose $(w,i)$-wise component is $p_i^w$ if $w\in \Omega^L$ and $-p^w_i$ if $w\in \Omega^R$.

\begin{lemma}\label{lma_CE}
\normalfont
If $(\Psi,\mathbf{p})$ is a competitive equilibrium in the modified market, then $(\Psi^L\cup(\Omega^R\setminus\Psi^R),\mathrm{g}(\mathbf{p}))$ is a competitive equilibrium in the original market. 
\end{lemma}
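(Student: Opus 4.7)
The plan is to unpack the competitive equilibrium condition agent-by-agent, showing that for each agent the demand condition in the modified market translates, via a natural change of variables on candidate demand sets, into the demand condition in the original market. Fix a competitive equilibrium $(\Psi,\mathbf{p})$ in the modified market, set $\Phi \equiv \Psi^L \cup (\Omega^R \setminus \Psi^R)$ and $\mathbf{q} \equiv \mathrm{g}(\mathbf{p})$, and observe first that $\Phi_i = \Psi_i^L \cup (\Omega_i^R \setminus \Psi_i^R)$, so the left-side part of $\Phi_i$ equals $\Psi_i^L$ and the right-side part is the complement of $\Psi_i^R$ in $\Omega_i^R$. This is precisely the argument transformation used to define $\tilde{\mathrm{v}}_i$, which is the key to making the proof go through.

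I would then verify $\Phi_i \in \mathrm{D}_i(\mathbf{q}_i)$ in three cases. For a left-side agent, $\tilde{\mathrm{v}}_i = \mathrm{v}_i$, $\mathbf{q}_i = \mathbf{p}_i$, and $\Phi_i = \Psi_i$, so there is nothing to prove. For a right-side agent $j \in I^R$, I take any candidate $\Phi' \subseteq \Omega_j$, set $\Psi' \equiv \Omega_j \setminus \Phi'$, and apply $\Psi_j \in \tilde{\mathrm{D}}_j(\mathbf{p}_j)$ to $\Psi'$; unfolding $\tilde{\mathrm{v}}_j(\Psi) = \mathrm{v}_j(\Omega_j \setminus \Psi)$ and adding the constant $\sum_{w \in \Omega_j} p_j^w$ to both sides rewrites the inequality as $\mathrm{v}_j(\Phi_j) - \sum_{w \in \Phi_j} q_j^w \geq \mathrm{v}_j(\Phi') - \sum_{w \in \Phi'} q_j^w$, since $q_j^w = -p_j^w$ on $\Omega_j = \Omega_j^R$. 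For a central agent $i \in I^M$, given any $\Phi' \subseteq \Omega_i$, I set $\Psi' \equiv \Phi'^L \cup (\Omega_i^R \setminus \Phi'^R)$, apply $\Psi_i \in \tilde{\mathrm{D}}_i(\mathbf{p}_i)$ to this $\Psi'$, and add the constant $\sum_{w \in \Omega_i^R} p_i^w$ to both sides, which converts transfers on right-side contracts from $-\sum_{\Psi_i^R} p_i^w$ to $+\sum_{\Omega_i^R \setminus \Psi_i^R} p_i^w$ and similarly on the right, yielding $\mathrm{v}_i(\Phi_i) - \sum_{w \in \Phi_i} q_i^w \geq \mathrm{v}_i(\Phi') - \sum_{w \in \Phi'} q_i^w$.

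The main conceptual point is that the maps $\Phi' \mapsto \Omega_j \setminus \Phi'$ for right-side agents and $\Phi' \mapsto \Phi'^L \cup (\Omega_i^R \setminus \Phi'^R)$ for central agents are bijections of $2^{\Omega_j}$ (respectively $2^{\Omega_i}$) onto themselves that exactly invert the valuation transformation, while the price sign-flip $\mathrm{g}$ absorbs the shift that arises when passing transfers from ``contracts signed'' to ``contracts not signed.'' Once the right constant is added to both sides, the two inequalities are algebraically identical. I do not expect any obstacle beyond bookkeeping: the only thing to be careful about is that the bijection is applied on candidate demand sets, not on the equilibrium set itself, and that the additive constant depends on the agent but not on the candidate demand, so it cancels cleanly. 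No appeal to the gross complementarity or full complementarity assumptions is needed for this lemma; Lemma \ref{lma_condition} already did that work, and here we only need the definitional equivalence of the two optimization problems.
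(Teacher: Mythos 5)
Your proof is correct and follows essentially the same route as the paper: the paper factors the per-agent change-of-variables computation into an auxiliary equivalence (Lemma \ref{lma_relation}, proved inside the proof of Lemma \ref{lma_condition}) and then invokes it agent by agent, whereas you inline the identical unfolding of $\tilde{\mathrm{v}}_i$, the bijection on candidate demand sets, and the cancellation of the additive constant. Your observation that no complementarity assumption is needed here also matches the paper, whose proof of this lemma is purely definitional.
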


Lemma \ref{lma_condition} and an existence result of \citet[Proposition 3]{RY20} implies the existence of a competitive equilibrium in our modified market; then, Lemma \ref{lma_CE} implies the existence of a competitive equilibrium in the original market. Consequently, we obtain the following result according to Lemma \ref{lma_stable}. 

\begin{theorem}\label{thm_TU}
\normalfont
A stable outcome exists in a TU market if contracts are same-side gross complements and cross-side gross substitutes for all agents.  
\end{theorem}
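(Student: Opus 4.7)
The plan is to assemble the chain of lemmas that has been prepared in Section \ref{Sec_tran}. The strategy transforms the original TU market into a modified one in which all agents have gross--complementary valuations, applies the existence result of \cite{RY20} to obtain a competitive equilibrium in the modified market, pulls this equilibrium back to the original market via the sign--flip operator $\mathrm{g}$, and finally invokes Lemma \ref{lma_stable} to convert it into a stable outcome.

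First, I would verify that every agent has gross--complementary preferences over primitive contracts in the modified market. For left--side agents, the transformation leaves valuations unchanged, and the hypothesis of same--side gross complementarity already reduces to ordinary gross complementarity since only one side is relevant. For right--side agents, Lemma \ref{lma_condition}(i) gives the conclusion directly. For central agents, Lemma \ref{lma_condition}(ii) delivers it: the intuition is that ``flipping'' the right side turns cross--side gross substitutes into same--side gross complements, and together with the pre--existing same--side complementarity on the left this yields full gross complementarity of $\mathrm{\tilde{v}}_i$. Second, with every agent's valuation now gross--complementary (equivalently, supermodular, by Lemma \ref{lma_RGC}), I would cite Proposition 3 of \cite{RY20} to conclude that a competitive equilibrium $(\Psi,\mathbf{p})$ exists in the modified market.

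Third, Lemma \ref{lma_CE} produces a competitive equilibrium $\bigl(\Psi^L\cup(\Omega^R\setminus\Psi^R),\,\mathrm{g}(\mathbf{p})\bigr)$ of the original market. Finally, Lemma \ref{lma_stable} converts this competitive equilibrium into a stable outcome $\kappa\bigl(\Psi^L\cup(\Omega^R\setminus\Psi^R),\,\mathrm{g}(\mathbf{p})\bigr)$ of the original market, completing the argument. Since all structural content has been packaged into Lemmas \ref{lma_condition}, \ref{lma_CE}, and \ref{lma_stable}, the proof of Theorem \ref{thm_TU} itself is a short composition; the only genuine obstacle is conceptual and lies in Lemma \ref{lma_condition}, where one must confirm that the transformation indeed upgrades the mixed same--side/cross--side preference structure to pure gross complementarity, which requires a careful translation between the demand--correspondence definition of gross complements and the sign--flipped valuation.
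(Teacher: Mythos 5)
Your proposal is correct and follows exactly the paper's own argument: Lemma \ref{lma_condition} ensures gross complementarity of all agents in the modified market, Proposition 3 of \cite{RY20} yields a competitive equilibrium there, Lemma \ref{lma_CE} pulls it back to the original market, and Lemma \ref{lma_stable} converts it into a stable outcome. Your observation that left-side agents' valuations are unchanged by the transformation is the right (implicit) complement to the two cases covered by Lemma \ref{lma_condition}.
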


\section{Application: Membership competition}\label{Sec_membership}

Online start-ups often emerge rapidly across many sectors,  but most ultimately falter in the competition. As firms merge or exit, many sectors evolve into oligopoly markets. Our finding for the TU market applies to an online duopoly in which two companies sell memberships to consumers. Unlike offline markets, online prices frequently vary over time and across different consumers.

For instance, a Chinese professor who often teaches online might subscribe to one of the two dominant virtual-meeting platforms, Tencent Meeting or DingTalk. Similarly, in the U.S., the choices might be Zoom and Microsoft Teams. When a customer buys a membership from one company, she gains access to upgraded services or discounted rates for individual services. Since memberships from different providers offer similar benefits, customers typically do not subscribe to more than one.

This dynamic extends to other online sectors, such as food delivery, ride-hailing, cloud storage, and cloud-server hosting. Memberships acquired by different customers often function as complements for a firm, as scale effects tend to outweigh substitution effects in internet-based industries. Moreover, in industries like cloud storage and cloud-server hosting, customers are often enterprises. In these scenarios, service providers usually charge personalized prices---an aspect our model explicitly accommodates.

\section{Appendix}

\subsection{Proofs for Section \ref{Sec_NTU}}\label{proof_NTU}

\begin{proof}[Proof of Theorem \ref{thm_non}]
We first show that at each Sage $k$ of the algorithm, the set $A^k$ produced by the left-side DA is the largest subset of $A^{k(1)}$ that is individually rational for all left-side and central agents, and the set $D^k$ produced by the right-side DA is the largest subset of $X^R$ among those that would be chosen by all central and right-side agents in the presence of $A^k$.
\begin{lemma}\label{Lma_IR}
\normalfont
Suppose contracts are same-side complementary\footnote{See footnote \ref{foot_SC} for the definition of same-side complementarity.} for all agents. At each Stage $k$ of the algorithm, 
\begin{description}
  \item[(i)] if $Y\subseteq A^{k(1)}$ satisfies $\mathrm{Ch}_i(Y)=Y_i$ for all $i\in I^L\cup I^M$, then $Y\subseteq A^k$; and
  \item[(ii)] if Stage $k$ contains a right-side DA, and $Y\subseteq X^R$ satisfies $Y_i\subseteq\mathrm{Ch}_i(A^k\cup Y)$ for all $i\in I^M\cup I^R$, then $Y\subseteq D^k$.
\end{description}
\end{lemma}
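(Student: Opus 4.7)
The plan is to prove both parts by induction on the step index of the respective DA, showing that the hypothesized ``individually rational'' set $Y$ is never touched by any rejection. Same-side complementarity (in the one-sided form of Footnote~\ref{foot_SC}) is exactly what lets $Y$ slip through each step, and nothing stronger should be needed.

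For part~(i), I would induct on $m$, proving $Y\subseteq A^{k(m)}$ for all $m$. The base case is the hypothesis. For the inductive step, assume $Y\subseteq A^{k(m)}$ and fix an agent $i\in I^L\cup I^M$; I need $Y_i\cap\mathrm{Re}_i(A^{k(m)})=\emptyset$. Since $Y\subseteq A^{k(1)}\subseteq X^L$, both $Y$ and $A^{k(m)}$ consist entirely of left-side contracts. Ordinary complementarity (for $i\in I^L$) and the $[\mathrm{Ch}_i(Y)]^L\subseteq[\mathrm{Ch}_i(Z)]^L$ clause of (\ref{MS2}) (for $i\in I^M$, with the right-side parts both empty) give $Y_i=\mathrm{Ch}_i(Y)\subseteq\mathrm{Ch}_i(A^{k(m)})$, hence no contract of $Y_i$ is rejected. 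So $Y\subseteq A^{k(m+1)}$, and the induction closes once the left-side DA terminates.

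For part~(ii), I would induct on $n$, proving $Y\subseteq D^{k(n)}$ for all $n$. The key structural observation is that throughout the right-side DA the left-side bundle seen by central agents is \emph{frozen} at $A^k$, regardless of the central-agent rejections of left-side contracts (which are stored for $A^{k+1(1)}$ only). The base case $Y\subseteq X^R=D^{k(1)}$ is immediate. For the inductive step, assume $Y\subseteq D^{k(n)}$ and fix an agent $i\in I^M\cup I^R$. For $i\in I^R$, only right-side contracts are visible, so complementarity together with $Y_i\subseteq\mathrm{Ch}_i(A^k\cup Y)=\mathrm{Ch}_i(Y)$ and $Y\subseteq D^{k(n)}$ yields $Y_i\subseteq\mathrm{Ch}_i(A^k\cup D^{k(n)})$. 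For $i\in I^M$, I would compare $Y':=A^k\cup Y$ and $Z':=A^k\cup D^{k(n)}$: they share the same left-side restriction ($A^k_i$) and satisfy $Y'^R\subseteq Z'^R$, so the $[\mathrm{Ch}_i(Y')]^R\subseteq[\mathrm{Ch}_i(Z')]^R$ clause of (\ref{MS1}) gives
\[
Y_i\;\subseteq\;[\mathrm{Ch}_i(A^k\cup Y)]^R\;\subseteq\;[\mathrm{Ch}_i(A^k\cup D^{k(n)})]^R\;\subseteq\;\mathrm{Ch}_i(A^k\cup D^{k(n)}).
\]
Thus $Y\cap E^{k(n)}=\emptyset$ and $Y\subseteq D^{k(n+1)}$, closing the induction.

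The main obstacle I anticipate is bookkeeping rather than mathematics: making sure I invoke only the one-way inclusions preserved in the footnote definition of same-side complementarity (the cross-side substitutability halves of (\ref{MS1})–(\ref{MS2}) play no role), and verifying that the left-side bundle $A^k$ really is constant across all $n$ in the right-side DA even though central agents may reject some of its contracts---those rejections are collected in $E^{k(n)}$ but only their right-side parts affect $D^{k(n+1)}$, which is precisely what makes the same-side-complementarity inclusion for central agents applicable.
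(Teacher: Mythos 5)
Your proof is correct and follows essentially the same route as the paper's: the paper's phrase ``this argument applies to each step'' is exactly your explicit induction on the step index, using the one-sided inclusions $[\mathrm{Ch}_i(Y)]^L\subseteq[\mathrm{Ch}_i(Z)]^L$ and $[\mathrm{Ch}_i(Y)]^R\subseteq[\mathrm{Ch}_i(Z)]^R$ for central agents and plain complementarity for the side agents, together with the observation that the left-side bundle is frozen at $A^k$ throughout the right-side DA. No gaps.
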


\begin{proof}
(i) Let $Y\subseteq A^{k(1)}$ be a set of contracts satisfying $\mathrm{Ch}_i(Y)=Y_i$ for all $i\in I^L\cup I^M$. Since contracts are same-side complementary, $x\in Y_i=\mathrm{Ch}_i(Y)$ implies $x\in \mathrm{Ch}_i(A^{k(1)})$ for all $x\in Y$ and $i\in \mathrm{N}(x)$. Thus, no element of $Y$ is rejected in the first step of the left-side DA; we have $Y\subseteq A^{k(2)}$. This argument applies to each step of the left-side DA, thus no element of $Y$ is rejected in any step of the left-side DA. Hence, we have $Y\subseteq A^k$.

(ii) Let $Y\subseteq X^R$ be a set satisfying $Y_i\subseteq\mathrm{Ch}_i(A^k\cup Y)$ for all $i\in I^M\cup I^R$. Since contracts are same-side complementary, $x\in Y_i\subseteq\mathrm{Ch}_i(A^k\cup Y)$ implies $x\in \mathrm{Ch}_i(A^k\cup X^R)$ for all $x\in Y$ and $i\in \mathrm{N}(x)$. Thus, no element of $Y$ is rejected in the first step of the right-side DA; and we have $Y\subseteq D^{k(2)}$. Since the left-side contracts available in the following steps are always those of $A^k$, the above argument applies to each step of the right-side DA, and thus we know that no element of $Y$ is rejected in any step of the right-side DA. Hence, we have $Y\subseteq D^k$.
\end{proof}

Suppose the algorithm terminates at Stage $s$. Since the set $A^k$ is from $A^{k(1)}$ via the left-side DA, and $A^{k(1)}$ is chosen from $A^{k-1}$ at the last step of the right-side DA, we have $A^s\subseteq A^{s(1)}\subseteq A^{s-1}\subseteq A^{s-1(1)}\subseteq\cdots \subseteq A^1\subseteq A^{1(1)}$. The following lemma shows the expansion of the right-side contracts chosen by the right-side DA.

\begin{lemma}\label{lma_D}
\normalfont
Let $s'$ be the last stage that includes a right-side DA. We have $D^k\subseteq D^{k+1}$ for all $k\in\{1,\cdots,s'-1\}$ in the algorithm if contracts are same-side complementary and cross-side substitutable for all agents.
\end{lemma}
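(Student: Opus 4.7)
The plan is to apply Lemma \ref{Lma_IR}(ii) at stage $k+1$ with the candidate set $Y = D^k$. That lemma will deliver $D^k \subseteq D^{k+1}$ once we verify its hypothesis, namely
\begin{equation*}
D^k_i \subseteq \mathrm{Ch}_i(A^{k+1}\cup D^k) \quad \text{for every } i\in I^M\cup I^R.
\end{equation*}
So the whole task reduces to checking this one inclusion; everything else is bookkeeping about what happens at the last step of the right-side DA at stage $k$.

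First I would extract the information available at termination of the right-side DA at stage $k$. Because that DA ended without any right-side rejection, in the final step agents chose from $A^k\cup D^k$ and no contract of $D^k$ was rejected. Hence $D^k_i \subseteq \mathrm{Ch}_i(A^k\cup D^k)$ for every $i\in I^M\cup I^R$, which for a central agent means $D^k_i \subseteq [\mathrm{Ch}_i(A^k\cup D^k)]^R$.

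Next I would move from $A^k$ to $A^{k+1}$. Since $A^{k+1}\subseteq A^{k+1(1)}\subseteq A^k$ (the $A^k$ are nested, as noted in the paragraph preceding Theorem~\ref{thm_non}), the sets $Y\equiv A^{k+1}\cup D^k$ and $Z\equiv A^k\cup D^k$ satisfy $Y_i^L\subseteq Z_i^L$ and $Y_i^R = Z_i^R = D^k_i$ for every central agent $i\in I^M$. Applying the cross-side substitutability condition (\ref{MS2}) to $Y$ and $Z$ yields
\begin{equation*}
[\mathrm{Ch}_i(A^k\cup D^k)]^R \subseteq [\mathrm{Ch}_i(A^{k+1}\cup D^k)]^R,
\end{equation*}
so that $D^k_i\subseteq [\mathrm{Ch}_i(A^{k+1}\cup D^k)]^R\subseteq \mathrm{Ch}_i(A^{k+1}\cup D^k)$. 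For a right-side agent $i\in I^R$ the argument is immediate because $\mathrm{Ch}_i$ depends only on $X^R$, so $\mathrm{Ch}_i(A^{k+1}\cup D^k)=\mathrm{Ch}_i(A^k\cup D^k)\supseteq D^k_i$.

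With the hypothesis verified, Lemma \ref{Lma_IR}(ii), applicable because stage $k+1\le s'$ contains a right-side DA, yields $D^k\subseteq D^{k+1}$. I expect the only delicate point to be making the application of (\ref{MS2}) airtight: one has to confirm that the relevant choice sets inherit exactly the ``$L$-shrinks, $R$-equal'' structure the condition demands, which is why I spell out $Y$ and $Z$ explicitly above. Once the contract sets are matched up this way, the monotone behavior of central agents under cross-side substitutability does all of the work, and right-side agents contribute nothing beyond the trivial equality of their choice sets.
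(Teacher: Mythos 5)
Your proof is correct and follows essentially the same route as the paper's: establish $D^k_i\subseteq\mathrm{Ch}_i(A^k\cup D^k)$ from the termination of the right-side DA at stage $k$, use $A^{k+1}\subseteq A^k$ together with cross-side substitutability (\ref{MS2}) to get $D^k_i\subseteq\mathrm{Ch}_i(A^{k+1}\cup D^k)$, and then invoke Lemma \ref{Lma_IR}(ii). Your write-up is merely more explicit than the paper's in identifying the sets to which (\ref{MS2}) is applied and in noting the trivial case of right-side agents.
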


\begin{proof}
At the last step of the right-side DA of Stage $k\in\{1,\cdots,s'-1\}$ of the algorithm, we have $D_i^k\subseteq\mathrm{Ch}_i(A^k\cup D^k)$ for all $i\in I^M\cup I^R$. Since $A^{k+1}\subseteq A^k$ and contracts are cross-side substitutable for central agents, we have $D_i^k\subseteq\mathrm{Ch}_i(A^{k+1}\cup D^k)$ for all $i\in I^M\cup I^R$. Then, according to Lemma \ref{Lma_IR} (ii), we have $D^k\subseteq D^{k+1}$.
\end{proof}

We have shown that $A^k$ shrinks and $D^k$ expands within the algorithm. The algorithm must eventually terminate since the contracts are finite. 

Now we modify the algorithm by preventing termination at left-side DA of any stage. Instead, the algorithm is required to proceed to the right-side DA in each stage. Consequently, if the original algorithm terminates at the right-side DA of some stage, the modified algorithm will also terminate there and produce an identical outcome. The following lemma demonstrates that, if the original algorithm terminates at the left-side DA of some stage, the modified algorithm terminates at the right-side DA of that stage and yields the same outcome.

\begin{lemma}\label{lma_extend}
\normalfont
Suppose contracts are cross-side substitutable\footnote{Contracts are cross-side substitutable for a central agent if the condition of Definition \ref{SCCS} only holds for the ``${[\mathrm{Ch}_i(Z)]}^L\subseteq {[\mathrm{Ch}_i(Y)]}^L $'' part of (\ref{MS1}) and the ``${[\mathrm{Ch}_i(Z)]}^R\subseteq {[\mathrm{Ch}_i(Y)]}^R $'' part of (\ref{MS2}).} for each central agent. If the original algorithm terminates at the left-side DA of Stage $s\geq 2$, then the last step of the right-side DA of Stage $s$ of the modified algorithm chooses $A^s\cup D^s=A^s\cup D^{s-1}$, and then the modified algorithm terminates.
\end{lemma}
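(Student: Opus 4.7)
By hypothesis the left-side DA of Stage $s$ makes no rejections, so $A^s = A^{s(1)}$, and the terminating step of Stage $s-1$'s right-side DA (whose left-side rejections are $A^{s-1}\setminus A^s$ and which has no right-side rejections) produces $\mathrm{Ch}_i(A^{s-1}\cup D^{s-1}) = A^s_i \cup D^{s-1}_i$ for every central $i$ and $\mathrm{Ch}_r(A^{s-1}\cup D^{s-1}) = D^{s-1}_r$ for every right-side $r$; in addition the hypothesis yields $\mathrm{Ch}_j(A^s) = A^s_j$ for every left-side or central $j$. The plan is to prove $D^s = D^{s-1}$ and that no left-side contract is rejected at the terminating step of Stage $s$'s right-side DA, which together force the modified algorithm to stop. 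Throughout I will repeatedly invoke the revealed-preference principle $\mathrm{Ch}_i(Y)\subseteq Z\subseteq Y\Rightarrow \mathrm{Ch}_i(Z) = \mathrm{Ch}_i(Y)$, valid because each $\succ_i$ is a strict ordering.

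For the easy inclusion $D^{s-1}\subseteq D^s$, I would first apply the revealed-preference principle with $Y = A^{s-1}\cup D^{s-1}$ and $Z = A^s\cup D^{s-1}\subseteq Y$: since $\mathrm{Ch}_i(Y) = A^s_i\cup D^{s-1}_i\subseteq Z$, this collapses to $\mathrm{Ch}_i(A^s\cup D^{s-1}) = A^s_i\cup D^{s-1}_i$, so $D^{s-1}_i\subseteq \mathrm{Ch}_i(A^s\cup D^{s-1})$ for each central $i$. The analogous inclusion for each right-side $r$ is immediate because $\mathrm{Ch}_r(A^s\cup D^{s-1}) = \mathrm{Ch}_r(D^{s-1}) = D^{s-1}_r$. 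Lemma \ref{Lma_IR}(ii) then gives $D^{s-1}\subseteq D^s$.

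The hard part is the reverse inclusion $D^s\subseteq D^{s-1}$, which I would obtain by promoting $D^s$ into a right-side fixed point for Stage $s-1$ and applying Lemma \ref{Lma_IR}(ii) a second time. Because Stage $s$'s right-side DA terminates at $D^s$, $\mathrm{Ch}_i(A^s\cup D^s)^R = D^s_i$ for every central $i$. The crux is to trap $\mathrm{Ch}_i(A^{s-1}\cup D^s)$ inside $A^s\cup D^s$: the cross-side portion of (\ref{MS1}) applied from $A^{s-1}\cup D^{s-1}$ to $A^{s-1}\cup D^s$ (same left, more right by the previous paragraph) forces $\mathrm{Ch}_i(A^{s-1}\cup D^s)^L\subseteq \mathrm{Ch}_i(A^{s-1}\cup D^{s-1})^L = A^s_i$, while the cross-side portion of (\ref{MS2}) applied from $A^s\cup D^s$ to $A^{s-1}\cup D^s$ (more left, same right) forces $\mathrm{Ch}_i(A^{s-1}\cup D^s)^R\subseteq \mathrm{Ch}_i(A^s\cup D^s)^R = D^s_i$. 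Thus $\mathrm{Ch}_i(A^{s-1}\cup D^s)\subseteq A^s_i\cup D^s_i\subseteq A^s\cup D^s$, so a second pass of the revealed-preference principle with $Y = A^{s-1}\cup D^s$ and $Z = A^s\cup D^s$ collapses $\mathrm{Ch}_i(A^{s-1}\cup D^s)$ onto $\mathrm{Ch}_i(A^s\cup D^s)$, yielding $D^s_i\subseteq \mathrm{Ch}_i(A^{s-1}\cup D^s)$. The right-side case is automatic, so Lemma \ref{Lma_IR}(ii) applied to Stage $s-1$ finally delivers $D^s\subseteq D^{s-1}$.

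Having pinned down $D^s = D^{s-1}$, the terminating-step choice $\mathrm{Ch}_i(A^s\cup D^{s-1}) = A^s_i\cup D^{s-1}_i$ from the second paragraph rules out any left-side rejections by central agents at termination; since each left-side agent already accepts $A^s$, we have $[E^{s(n)}]^L = \emptyset$, so the modified algorithm stops with output $A^s\cup D^{s-1}$. The main obstacle is the IIA collapse in the third paragraph: it works only because $\mathrm{Ch}_i(A^{s-1}\cup D^{s-1})^L$ equals $A^s_i$ rather than merely $A^{s(1)}_i$, which is precisely where the no-rejection hypothesis for Stage $s$'s left-side DA is used; without it the chain of cross-side implications would leave $\mathrm{Ch}_i(A^{s-1}\cup D^s)$ inside $A^{s(1)}\cup D^s$, and the subsequent collapse onto $A^s\cup D^s$ would fail.
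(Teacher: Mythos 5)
Your argument is correct under the hypotheses of Theorem \ref{thm_non} and reaches the conclusion by a genuinely different route. The paper shows that the right-side DA of Stage $s$ replicates that of Stage $s-1$ step by step: cross-side substitutability makes each central agent's chosen left-side contracts expand across the steps of Stage $s-1$'s right-side DA, so they lie in $A^{s(1)}=A^s$ at every step; shrinking the available left-side contracts from $A^{s-1}$ to $A^s$ therefore leaves every optimal choice intact, the same right-side contracts are rejected at each step, and $D^s=D^{s-1}$ falls out of the induction. You instead prove the two inclusions $D^{s-1}\subseteq D^s$ and $D^s\subseteq D^{s-1}$ separately, each time exhibiting a right-side ``fixed point'' and invoking the maximality statement of Lemma \ref{Lma_IR}(ii); the trap $\mathrm{Ch}_i(A^{s-1}\cup D^s)\subseteq A^s_i\cup D^s_i$ obtained from the cross-side parts of (\ref{MS1}) and (\ref{MS2}), followed by the revealed-preference collapse onto $\mathrm{Ch}_i(A^s\cup D^s)$, is a clean argument that avoids tracking the DA step by step, and your closing remark correctly identifies where the no-rejection hypothesis $A^s=A^{s(1)}$ enters.

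The one substantive caveat: Lemma \ref{lma_extend} is stated under cross-side substitutability alone, and the paper's step-by-step proof uses only that condition. Both of your appeals to Lemma \ref{Lma_IR}(ii) require same-side complementarity---that is the hypothesis of that lemma and is genuinely needed for it---so your argument establishes the statement only under the conjunction of same-side complementarity and cross-side substitutability. This is harmless for the paper's purposes, since Lemma \ref{lma_extend} is only ever applied inside the proof of Theorem \ref{thm_non}, where both conditions hold, but as written your proof does not cover the lemma under its stated, weaker hypothesis; you should either add same-side complementarity to the hypothesis or note that you are proving a restricted version.
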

\begin{proof}
In the right-side DA, as right-side contracts are removed in each step, the chosen left-side contracts of each central agent in each step of the right-side DA expand due to cross-side substitutability. At the right-side DA of Stage $s-1$, for each central agent $i\in I^M$, the left-side contracts of $A_i^{s-1}$ are available in each step, and the left-side contracts of $A_i^{s(1)}=A_i^s\subseteq A_i^{s-1}$ are chosen at the last step,\footnote{Notice that, in the right-side DA, a right-side contract may be rejected by a central agent or a right-side agent, but a left-side contract can only be chosen or rejected by the central agent who signs this contract.} where the equality is due the termination of the original algorithm at Stage $s$. Hence, for each central agent $i\in I^M$, the left-side contracts chosen in each step of the left-side DA of Stage $s-1$ belong to $A_i^s$.

Compare the first step of the right-side DA in Stage $s-1$ and Stage $s$: for each central agent $i\in I^M$, the left-side contracts available are from $A_i^{s-1}$ in the former and from $A_i^s\subseteq A_i^{s-1}$ in the latter; the right-side contracts available are both from $X_i^R$. Since the left-side contracts chosen by each central agent $i\in I^M$ in the left-side DA of Stage $s-1$ belong to $A_i^s$, each central agent's optimal choice in the first step of the right-side DA of Stage $s-1$ is also available in the first step of the right-side DA of Stage $s$; thus each central agent chooses the same contracts in the first step of the right-side DA of Stage $s-1$ and Stage $s$. Hence, the right-side contracts removed are also the same. This argument applies to each step of the right-side DA of Stage $s-1$ and Stage $s$. Consequently, the last step of the right-side DA of Stage $s$ of the modified algorithm chooses $A^s\cup D^s=A^s\cup D^{s-1}$, and the modified algorithm terminates.
\end{proof}

Due to this lemma, we prove the theorem by showing that the outcome produced by the modified algorithm is stable. Notice that Lemma \ref{Lma_IR} and Lemma \ref{lma_D} also hold for the modified algorithm. Suppose the modified algorithm terminates at Stage $s$ and produces $A^s\cup D^s$.

The produced outcome $A^s\cup D^s$ is individually rational for the left-side agents according to the definition of $A^s$. It is also individually rational for the central and right-side agents due to condition (ii) in the right-side DA for the termination of the algorithm. 

Suppose the produced outcome $A^s\cup D^s$ is blocked by $Z\subseteq X\setminus(A^s\cup D^s)$. Since $A^k$ shrinks from one stage to the next, the available left-side contracts in any left-side DA or right-side DA always contain those of $A^s$. At each Stage $k$, the right-side contracts of $D^k$ are chosen at the last step of the right-side DA, and no right-side contracts are rejected at this step. Hence, the available right-side contracts at the last step of the right-side DA of each Stage $k$ are also those of $D^k$, which are contained in $D^s$ (due to Lemma \ref{lma_D}). Then, since $Z_i\subseteq\mathrm{Ch}_i(A^s\cup D^s\cup Z)$ for all $i\in \mathrm{N}(Z)$, those left-side contracts of $Z^L$ cannot be removed in the algorithm due to same-side complementarity and cross-side substitutability. Consequently, we have $Z^L=\emptyset$. 

Now we know $Z\subseteq X^R$, and thus $D^s\cup Z\subseteq X^R$.
We have $D^s_i\subseteq\mathrm{Ch}_i(A^s\cup D^s)$ for all $i\in I^M\cup I^R$ at the final step of the right-side DA of Stage $s$ of the modified algorithm. Then, since contracts are same-side complementary,  $Z_i\subseteq\mathrm{Ch}_i(A^s\cup D^s\cup Z)$ for all $i\in \mathrm{N}(Z)$, and $D^s\cup Z\subseteq X^R$, we have $D^s_i\cup Z_i\subseteq\mathrm{Ch}_i(A^s\cup D^s\cup Z)$ for all $i\in \mathrm{N}(Z)$. Since $D^s\cup Z\subseteq X^R$, by Lemma \ref{Lma_IR} (ii) we have $D^s\cup Z\subseteq D^s$. This contradicts that $Z$ is nonempty and $Z\subseteq X\setminus(A^s\cup D^s)$.
\end{proof}

\begin{proof}[Proof of Proposition \ref{prop_equi}]
We first show that, when each central agent has to pick one side, an outcome is blocked only if it is blocked by a set of contracts on one side, and an outcome is setwise blocked only if it is setwise blocked by a set of contracts on one side.
\begin{lemma}\label{lma_block}
\normalfont
Suppose each central agent $i\in M$ has to pick one side.
\begin{description}
  \item[(i)] If an outcome $Y\subseteq X$ is blocked by a nonempty set $Z\subseteq X\setminus Y$, then $Y$ is blocked by $Z^L$ if $Z^L\neq\emptyset$, and blocked by $Z^R$ if  $Z^R\neq\emptyset$.
  \item[(ii)] If an outcome $Y\subseteq X$ is setwise blocked by a nonempty set $Z\subseteq X\setminus Y$, then $Y$ is setwise blocked by $Z^L$ if $Z^L\neq\emptyset$, and setwise blocked by $Z^R$ if  $Z^R\neq\emptyset$.
\end{description}
\end{lemma}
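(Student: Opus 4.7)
The plan is to handle both parts by the same mechanism: for any central agent $i$ with $Z_i^L \neq \emptyset$, the pick-one-side assumption forces her chosen set at the relevant point to be contained in $X_i^L$, so excising the right-side contracts $Z^R$ has no effect on her. The statements for $Z^R$ then follow by swapping the roles of left and right throughout.

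For part (i), I need to verify $Z_i^L \subseteq \mathrm{Ch}_i(Y \cup Z^L)$ for every $i \in \mathrm{N}(Z^L)$. For a left-side agent this is immediate: $X_i \subseteq X^L$ gives $(Y \cup Z^L)_i = (Y \cup Z)_i$, and so $\mathrm{Ch}_i(Y \cup Z^L) = \mathrm{Ch}_i(Y \cup Z) \supseteq Z_i = Z_i^L$. For a central agent $i \in \mathrm{N}(Z^L)$, the key step is to argue that $\mathrm{Ch}_i(Y \cup Z) \subseteq X_i^L$: since $\emptyset \neq Z_i^L \subseteq \mathrm{Ch}_i(Y \cup Z)$ and strict preferences give $\mathrm{Ch}_i(Y \cup Z) \succ_i \emptyset$, the pick-one-side condition places the chosen set on a single side, and the presence of a left-side contract fixes that side as left. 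Since $(Y \cup Z^L)_i^L = (Y \cup Z)_i^L$ and $(Y \cup Z^L)_i^R \subseteq (Y \cup Z)_i^R$, any alternative subset available in $Y \cup Z^L$ is either a left-side subset already available in $Y \cup Z$, a right-side subset already available in $Y \cup Z$ (both dominated by $\mathrm{Ch}_i(Y \cup Z)$), or a mixed subset, which is strictly worse than $\emptyset$ by pick-one-side. Hence $\mathrm{Ch}_i(Y \cup Z^L) = \mathrm{Ch}_i(Y \cup Z) \supseteq Z_i^L$.

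For part (ii), given a setwise-block witness $Y^* \subseteq Y \cup Z$, I will propose $Y^{**} \equiv Y^* \setminus Z^R$ as the witness for the setwise block by $Z^L$. The containments $Y^{**} \subseteq Y \cup Z^L$ and $Z^L \subseteq Y^{**}$ follow from $Y \cap Z = \emptyset$ and $Z^L \cap Z^R = \emptyset$. The critical observation is that $Y_i^{**} = Y_i^*$ for every $i \in \mathrm{N}(Z^L)$: trivially for left-side agents since $Z_i^R = \emptyset$, and for central agents by applying the same pick-one-side argument to the individually rational set $Y_i^* = \mathrm{Ch}_i(Y^*) \supseteq Z_i^L \neq \emptyset$, which forces $Y_i^* \subseteq X_i^L$ and hence $Y_i^* \cap Z^R = \emptyset$. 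Because $\mathrm{Ch}_i$ depends only on its argument's $i$-restriction, this equality yields both $Y_i^{**} = Y_i^* \succ_i Y_i$ and the individual rationality $\mathrm{Ch}_i(Y^{**}) = \mathrm{Ch}_i(Y^*) = Y_i^* = Y_i^{**}$.

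The main obstacle—really the only nontrivial step—is the dominance argument in part (i) for central agents: I must rule out that shrinking the available right-side contracts could flip the optimal choice to some subset not available in $Y \cup Z$. The resolution is the three-way case split above, where the mixed case is precisely where the pick-one-side condition is indispensable; the other two cases only use that $\mathrm{Ch}_i(Y \cup Z)$ was already the maximizer over a weakly larger candidate pool.
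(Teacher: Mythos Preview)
Your proposal is correct and follows essentially the same route as the paper: both hinge on the observation that, for a central agent $i$ with $Z_i^L\neq\emptyset$, the pick-one-side condition forces the relevant chosen set to lie in $X_i^L$. The paper's part (i) is a bit more direct---it notes immediately that $Z_i\subseteq \mathrm{Ch}_i(Y\cup Z)\subseteq X_i^L$ gives $Z_i^R=\emptyset$ and hence $(Y\cup Z^L)_i=(Y\cup Z)_i$, bypassing your three-case split---and in part (ii) it uses the witness $Y^{*L}$ rather than your $Y^*\setminus Z^R$, but both witnesses coincide on $\mathrm{N}(Z^L)$ and the argument structure is identical.
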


\begin{proof}
Suppose $Z^L\neq\emptyset$.

(i) For any central agent $i\in \mathrm{N}(Z^L)\cap I^M$, since $\emptyset\neq Z_i^L\subseteq Z_i\subseteq \mathrm{Ch}_i(Y\cup Z)$ and the agent $i$ has to pick one side, we know that there are no right-side contracts of $i$ in $Z$, and thus, we have $Z^L_i\subseteq \mathrm{Ch}_i(Y\cup Z^L)$. For each left-side agent $i\in \mathrm{N}(Z^L)\cap I^L$, since $Z_i=Z_i^L$, we also have $Z^L_i\subseteq \mathrm{Ch}_i(Y\cup Z^L)$. Therefore, the outcome $Y$ is blocked by $Z^L$. The other part follows the same argument.

(ii) Let $Y^*\subseteq Y\cup Z$ be the outcome such that $Z\subseteq Y^*$, $Y_i^*\succ_iY_i$, and $Y^*_i$ is individually rational for all $i\in \mathrm{N}(Z)$.
For any central agent $i\in \mathrm{N}(Z^L)\cap I^M$, since the agent $i$ has to pick one side and $\emptyset\neq Z^L_i\subseteq Z_i\subseteq Y^*_i=\mathrm{Ch}_i(Y^*_i)$, we know that there are no right-side contracts of $i$ in $Y^*$, and thus, we have $Y_i^{*L}\succ_iY_i$ and $Y^{*L}_i=\mathrm{Ch}_i(Y^{*L}_i)$. For any left-side agent $i\in \mathrm{N}(Z^L)\cap I^L$, we also have $Y_i^{*L}\succ_iY_i$ and $Y^{*L}_i=\mathrm{Ch}_i(Y^{*L}_i)$; then since $Z^L\subseteq Y^{*L}$, the outcome $Y$ is setwise blocked by $Z^L$. The other part follows the same argument.
\end{proof}

The ``only if'' part. Suppose an outcome $Y$ is stable but setwise blocked by $Z$. We have $Z^L\neq\emptyset$ or $Z^R\neq\emptyset$. Without loss of generality, we assume $Z^L\neq\emptyset$, then by Lemma \ref{lma_block}, $Y$ is also setwise blocked by $Z^L$. Let $Y^*\subseteq Y\cup Z^L$ be the outcome such that $Z^L\subseteq Y^*$, $Y_i^*\succ_iY_i$, and $Y^*_i$ is individually rational for all $i\in \mathrm{N}(Z^L)$. For all $i\in \mathrm{N}(Z^L)$, since $\emptyset\neq Z^L_i\subseteq Y^*_i=\mathrm{Ch}_i(Y^*_i)$, and each central agent has to pick one side, we have $Y_i^*\subseteq X^L$; and thus, $Y_i^*\succ_iY_i=\mathrm{Ch}_i(Y_i)$ and $Y_i^*\subseteq Y_i\cup Z_i^L$ imply $\mathrm{Ch}_i(Y_i\cup Z_i^L)\subseteq X^L$. We have $Z_i^L\subseteq Y^*_i= \mathrm{Ch}_i(Y^*_i)\subseteq Y^L_i\cup Z_i^L$; thus, same-side complementarity implies $Z_i^L\subseteq \mathrm{Ch}_i(Y_i^L\cup Z_i^L)=\mathrm{Ch}_i(Y_i\cup Z_i^L)$. Hence, we know that $Z^L$ blocks $Y$, which contradicts the stability of $Y$. 

The ``if'' part. Suppose an outcome $Y$ is setwise stable but blocked by $Z$. We have $Z^L\neq\emptyset$ or $Z^R\neq\emptyset$. Without loss of generality, we assume $Z^L\neq\emptyset$, then by Lemma \ref{lma_block}, $Y$ is also blocked by $Z^L$. For any $i\in \mathrm{N}(Z^L)$, since $\emptyset\neq Z_i^L\subseteq\mathrm{Ch}_i(Y_i\cup Z_i^L)$, and each central agent has to pick one side, we have $\mathrm{Ch}_i(Y_i\cup Z_i^L)\subseteq X^L$. For any $i\in \mathrm{N}(Z^L)$, since $Y^L_i\subseteq Y_i=\mathrm{Ch}_i(Y_i)$, same-side complementarity implies $Y^L_i\subseteq\mathrm{Ch}_i(Y_i\cup Z_i^L)$; and thus, $\mathrm{Ch}_i(Y_i\cup Z_i^L)\subseteq X^L$ implies $\mathrm{Ch}_i(Y_i\cup Z_i^L)=Y^L_i\cup Z_i^L$. Let $Y^*\equiv \bigcup_{i\in \mathrm{N}(Z^L)}(Y^L_i\cup Z_i^L)$, we have $Y^*\subseteq Y\cup Z^L$ and $Z^L\subseteq Y^*$. For any $i\in \mathrm{N}(Z^L)$, we have $Y^*_i=Y^L_i\cup Z_i^L=\mathrm{Ch}_i(Y^*_i)$ and $Y^*_i=\mathrm{Ch}_i(Y_i\cup Z_i^L)\succ_iY_i$. Therefore, $Y$ is setwise blocked by $Z^L$. 
\end{proof}

\subsection{An example for Section \ref{Sec_pick}}\label{Sec_exam}

In this section, we provide an example of the alternate DA algorithm for the organization-member market described in Section \ref{Sec_pick}. We switch to the notation of \cite{RS90}.

\begin{example}\label{exam_org}
\normalfont
Consider a market with two organizations ($o_1$ and $o_2$) and five agents ($i_1, i_2, i_3, i_4$, and $i_5$).
\begin{align*}
  o_1: \{i_1,i_2,i_3,i_4,i_5\}\succ\{i_1,i_4,i_5\}\succ\{i_2,i_3\}\succ\emptyset\qquad\qquad\qquad & i_1: o_2\succ o_1\succ\emptyset \\
    o_2: \{i_1,i_2,i_3,i_4\}\succ\{i_2,i_3\}\succ\{i_2\}\succ\emptyset\qquad\qquad\qquad & i_2: o_2\succ o_1\succ\emptyset \\
  & i_3: o_1\succ o_2\succ\emptyset \\
  & i_4: o_1\succ o_2\succ\emptyset \\
  & i_5: o_1\succ\emptyset
\end{align*}

\textbf{Stage} 1. All agents propose to the organization $o_1$, and $o_1$ accepts all proposals. Then, the agents $i_1$ and $i_2$ propose to the organization $o_2$, and $o_2$ accepts $i_2$ and rejects $i_1$.

\textbf{Stage} 2. All agents except $i_2$ propose to $o_1$, and $o_1$ accepts $i_1$, $i_4$, and $i_5$ and rejects $i_3$. Then $i_1$, $i_2$, and $i_3$ propose to $o_2$; and $o_2$ accepts $i_2$ and $i_3$ and rejects $i_1$. The algorithm terminates since the chosen set of $o_2$ does not contain any agent currently matching to $o_1$.

The algorithm matches $\{i_1,i_4,i_5\}$ to $o_1$ and matches $\{i_2,i_3\}$ to $o_2$. If we swap the positions of the two organizations in the algorithm, the algorithm produces another outcome: matching $\{i_1,i_2,i_3,i_4\}$ to $o_2$, and leaving $i_5$ and $o_1$ unmatched.
\end{example}

\subsection{Proofs for Section \ref{Sec_TU}}

\begin{proof}[Proof of Lemma \ref{lma_stable}]
Suppose $(\Phi,\mathbf{p})$ is a competitive equilibrium, but $\kappa(\Phi,\mathbf{p})$ is not stable.

(i) Suppose $\kappa(\Phi,\mathbf{p})$ is not individually rational for some $i\in I$: $[\kappa(\Phi,\mathbf{p})]_i\neq \mathrm{Ch}_i(\kappa(\Phi,\mathbf{p}))$. This contradicts $\Phi_i\in\mathrm{D}_i(\mathbf{p}_i)$.

(ii) Suppose $\kappa(\Phi,\mathbf{p})$ is blocked by $Z\subseteq X\setminus\kappa(\Phi,\mathbf{p})$. Then, for each $i\in\mathrm{N}(Z)$, there exists $A^i\in \mathrm{Ch}_i(\kappa(\Phi,\mathbf{p})\cup Z)$ such that $Z_i\subseteq A^i$ and $\mathrm{u}_i(A^i)>\mathrm{u}_i([\kappa(\Phi,\mathbf{p})]_i)$. For each $i\in\mathrm{N}(Z)$, let $\mathrm{t}_i(Z_i)\equiv\sum_{(w,\mathbf{\hat{p}}^w)\in Z_i}\hat{p}^w_i$ be her net payment in the contracts of $Z_i$, and let $\mathrm{t}'_i(Z_i,\mathbf{p})\equiv\sum_{(w,\mathbf{\hat{p}}^w)\in Z_i}p^w_i$ be her net payment from the primitive contracts of $\tau(Z_i)$ if she signs these primitive contracts at $\mathbf{p}$. Suppose $\mathrm{t}_i(Z_i)\geq\mathrm{t}'_i(Z_i,\mathbf{p})$ for some $i\in\mathrm{N}(Z)$. The block means that the agent $i$ can obtain a larger utility $\mathrm{u}_i(A^i)$ by signing the contracts of $Z_i$ while possibly dropping some original contracts. Hence, she can obtain an even larger utility by signing the primitive contracts of $\tau(Z_i)$ at $\mathbf{p}$ while dropping the same contracts. This contradicts that $(\Phi,\mathbf{p})$ is a competitive equilibrium. Thus, we have $\mathrm{t}_i(Z_i)<\mathrm{t}'_i(Z_i,\mathbf{p})$ for all $i\in\mathrm{N}(Z)$. However, this is impossible since $\sum_{i\in\mathrm{N}(Z)}\mathrm{t}_i(Z_i)=\sum_{i\in\mathrm{N}(Z)}\mathrm{t}'_i(Z_i,\mathbf{p})=0$.
\end{proof}

\begin{proof}[Proof of Lemma \ref{lma_RGC}]
(iv)$\Rightarrow$(i). Recall that  $\mathrm{v}_i$ is supermodular if and only if for any $\Phi\subset\Psi\subseteq \Omega_i$ and $w'\in\Omega_i\setminus\Psi$, $\mathrm{v}_i(\Phi\cup\{w'\})-\mathrm{v}_i(\Phi)\leq \mathrm{v}_i(\Psi\cup\{w'\})-\mathrm{v}_i(\Psi)$. Suppose $\mathrm{v}_i$ is not supermodular, then there exist $\Phi\subset\Psi\subseteq \Omega_i$ and $w'\in\Omega_i\setminus\Psi$ such that $\mathrm{v}_i(\Phi\cup\{w'\})-\mathrm{v}_i(\Phi)> \mathrm{v}_i(\Psi\cup\{w'\})-\mathrm{v}_i(\Psi)$. Let $H\equiv\sum_{\Phi'\subseteq\Omega_i}|\mathrm{v}_i(\Phi')|$ be a sufficiently large number. Select $\mathbf{p}_i\in\mathbb{R}^{\Omega_i}$ such that $p_i^w=H$ for each $w\notin\Phi\cup\{w'\}$, $p_i^w=-H$ for each $w\in\Phi$, and $p^{w'}_i$ is a number satisfying
\begin{equation}\label{GS}
\mathrm{v}_i(\Phi\cup\{w'\})-\mathrm{v}_i(\Phi)>p^{w'}_i> \mathrm{v}_i(\Psi\cup\{w'\})-\mathrm{v}_i(\Psi),
\end{equation}
The first inequality of (\ref{GS}) implies
\begin{equation}\label{GS1}
\mathrm{v}_i(\Phi\cup\{w'\})-\sum_{w\in\Phi\cup\{w'\}}p^w_i>\mathrm{v}_i(\Phi)-\sum_{w\in\Phi}p^w_i. 
\end{equation}
Since $H$ is sufficiently large, $p_i^w=H$ for each $w\notin\Phi\cup\{w'\}$, and $p_i^w=-H$ for each $w\in\Phi$, we know that (\ref{GS1}) implies $\mathrm{D}_i(\mathbf{p}_i)=\{\Phi\cup\{w'\}\}$.

Let $\mathbf{q}_i\in\mathbb{R}^{\Omega_i}$ be the price vector with the same coordinates as $\mathbf{p}_i$ except that $q_i^w=-H$ for each $w\in\Psi\setminus\Phi$. The second inequality of (\ref{GS}) implies
\begin{equation}\label{GS2} \mathrm{v}_i(\Psi)-\sum_{w\in\Psi}q^w_i>\mathrm{v}_i(\Psi\cup\{w'\})-\sum_{w\in\Psi\cup\{w'\}}q^w_i,
\end{equation}
Since $H$ is sufficiently large, $q_i^w=H$ for each $w\notin\Psi\cup\{w'\}$, and $q_i^w=-H$ for each $w\in\Psi$, we know that (\ref{GS2}) implies $\mathrm{D}_i(\mathbf{q}_i)=\{\Psi\}$. 

Consequently, condition (iv) fails: We have $\mathbf{p}_i\geq\mathbf{q}_i$ and $|\mathrm{D}_i(\mathbf{p}_i)|=|\mathrm{D}_i(\mathbf{q}_i)|=1$, but we also have $p_i^{w'}=q_i^{w'}$, $w'\in\Phi\cup\{w'\}$, and $w'\notin\Psi$. 
\end{proof}

\begin{proof}[Proof of Lemma \ref{lma_condition}]
Let $\tilde{\mathrm{D}}_i$ be the demand correspondence of each central or right-side agent $i\in I^M\cup I^R$ in the modified market. We abuse the notation by using $\mathrm{g}(\cdot)$ on a price vector in $\mathbb{R}^{\Omega_i}$ to reverse the directions of the transfers of all right-side contracts: For every central or right-side agent $i\in I^M\cup I^R$ and $\mathbf{p}_i\in \mathbb{R}^{\Omega_i}$, let $\mathrm{g}(\mathbf{p}_i)\in\mathbb{R}^{\Omega_i}$ be the price vector whose $w$-wise component is $p^w_i$ if $w\in \Omega_i^L$ and $-p^w_i$ if $w\in \Omega_i^R$. Notice that $\mathrm{g}(\mathrm{g}(\mathbf{p}))=\mathbf{p}$ for all $\mathbf{p}\in\mathbb{B}$, and $\mathrm{g}(\mathrm{g}(\mathbf{p}_i))=\mathbf{p}_i$ for all $\mathbf{p}_i\in \mathbb{R}^{\Omega_i}$. The following lemma shows the relation between the demand correspondences of the original market and the modified market.

\begin{lemma}\label{lma_relation}
\normalfont
\begin{description}
  \item[(i)] For each central agent $i\in I^M$ and  $\mathbf{p}_i\in \mathbb{R}^{\Omega_i}$, we have $\Psi\in \tilde{\mathrm{D}}_i(\mathrm{g}(\mathbf{p}_i))$ if and only if $\Psi^L\cup(\Omega_i^R\setminus \Psi^R)\in\mathrm{D}_i(\mathbf{p}_i)$.
  \item[(ii)] For each right-side agent $i\in I^R$ and  $\mathbf{p}_i\in \mathbb{R}^{\Omega_i}$, we have $\Psi\in \tilde{\mathrm{D}}_i(\mathrm{g}(\mathbf{p}_i))$ if and only if $\Omega_i\setminus \Psi\in\mathrm{D}_i(\mathbf{p}_i)$.
\end{description}
\end{lemma}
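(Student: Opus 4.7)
The plan is to prove both parts by a direct computation: expand the definition of the demand correspondence in the modified market, substitute the bijection between $\Psi$ and its image, and show that the objective differs from the original-market objective by an additive constant that does not depend on the choice variable. Once this is done, the ``$\arg\max$'' sets on the two sides correspond exactly under the bijection.

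For part (i), I would fix a central agent $i\in I^M$ and a price $\mathbf{p}_i\in\mathbb{R}^{\Omega_i}$, and consider the map $\Psi\mapsto \Phi\equiv \Psi^L\cup(\Omega_i^R\setminus\Psi^R)$ on $2^{\Omega_i}$. Since $\Phi^L=\Psi^L$ and $\Phi^R=\Omega_i^R\setminus\Psi^R$, this map is an involution on $2^{\Omega_i}$ and hence a bijection. Using the definitions $\tilde{\mathrm{v}}_i(\Psi)=\mathrm{v}_i(\Psi^L\cup(\Omega_i^R\setminus\Psi^R))=\mathrm{v}_i(\Phi)$ and $[\mathrm{g}(\mathbf{p}_i)]^w=p^w_i$ for $w\in\Omega_i^L$, $[\mathrm{g}(\mathbf{p}_i)]^w=-p^w_i$ for $w\in\Omega_i^R$, I would compute
\begin{equation*}
\tilde{\mathrm{v}}_i(\Psi)-\sum_{w\in\Psi}[\mathrm{g}(\mathbf{p}_i)]^w=\mathrm{v}_i(\Phi)-\sum_{w\in\Phi^L}p^w_i+\sum_{w\in\Psi^R}p^w_i.
\end{equation*}
Substituting $\Psi^R=\Omega_i^R\setminus\Phi^R$ into the last sum gives $\sum_{w\in\Psi^R}p^w_i=\sum_{w\in\Omega_i^R}p^w_i-\sum_{w\in\Phi^R}p^w_i$, so the expression equals $\mathrm{v}_i(\Phi)-\sum_{w\in\Phi}p^w_i+C_i$, where $C_i\equiv\sum_{w\in\Omega_i^R}p^w_i$ depends only on $\mathbf{p}_i$. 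Since the constant $C_i$ plays no role in the $\arg\max$, maximizing the left-hand side over $\Psi\subseteq\Omega_i$ is equivalent to maximizing $\mathrm{v}_i(\Phi)-\sum_{w\in\Phi}p^w_i$ over $\Phi\subseteq\Omega_i$; this yields the claimed equivalence $\Psi\in\tilde{\mathrm{D}}_i(\mathrm{g}(\mathbf{p}_i))$ iff $\Phi\in\mathrm{D}_i(\mathbf{p}_i)$.

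For part (ii), I would use that every contract of a right-side agent $i\in I^R$ lies on the right side, so $\Omega_i=\Omega_i^R$ and $[\mathrm{g}(\mathbf{p}_i)]^w=-p^w_i$ for all $w\in\Omega_i$. The relevant bijection here is complementation $\Psi\mapsto\Phi\equiv\Omega_i\setminus\Psi$. Using $\tilde{\mathrm{v}}_i(\Psi)=\mathrm{v}_i(\Omega_i\setminus\Psi)=\mathrm{v}_i(\Phi)$ and $\Psi=\Omega_i\setminus\Phi$, I would obtain
\begin{equation*}
\tilde{\mathrm{v}}_i(\Psi)-\sum_{w\in\Psi}[\mathrm{g}(\mathbf{p}_i)]^w=\mathrm{v}_i(\Phi)+\sum_{w\in\Omega_i\setminus\Phi}p^w_i=\mathrm{v}_i(\Phi)-\sum_{w\in\Phi}p^w_i+\sum_{w\in\Omega_i}p^w_i,
\end{equation*}
and again the trailing sum is a constant in $\Phi$, so the maximizers correspond under complementation.

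There is no real obstacle here; the argument is a bookkeeping check. The only delicate point is getting the signs right when the operator $\mathrm{g}$ flips the right-side prices and the complement $\Omega_i^R\setminus\Psi^R$ swaps ``chosen'' and ``not chosen'' right-side contracts, so I would be careful to write the two sums over $\Psi^R$ and $\Phi^R$ explicitly and verify that the cross terms collapse into the stated constant.
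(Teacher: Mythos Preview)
Your proposal is correct and follows essentially the same approach as the paper: both arguments expand the modified-market objective, flip the sign on the right-side prices via $\mathrm{g}$, and observe that the resulting expression differs from the original-market objective by the constant $\sum_{w\in\Omega_i^R}p^w_i$, so the $\arg\max$ sets correspond under the stated bijection. The paper phrases this as a direct inequality comparison between a candidate $\Psi$ and an arbitrary competitor $\Phi$ rather than isolating the additive constant explicitly, but the computation is the same.
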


\begin{proof}
(i) For each central agent $i\in I^M$ and $\mathbf{p}_i\in \mathbb{R}^{\Omega_i}$, notice that $\Psi\in \tilde{\mathrm{D}}_i(\mathrm{g}(\mathbf{p}_i))$ is equivalent to
\begin{equation*}
\mathrm{v}_i(\Psi^L\cup(\Omega_i^R\setminus \Psi^R))-\sum_{w\in\Psi^L}p^w_i+\sum_{w\in\Psi^R}p^w_i\geq \mathrm{v}_i(\Phi^L\cup(\Omega_i^R\setminus \Phi^R))-\sum_{w\in\Phi^L}p^w_i+\sum_{w\in\Phi^R}p^w_i
\end{equation*}
for all $\Phi\in\Omega_i$, which is equivalent to
\begin{equation*}
\mathrm{v}_i(\Psi^L\cup(\Omega_i^R\setminus \Psi^R))-\sum_{w\in\Psi^L\cup(\Omega_i^R\setminus \Psi^R)}p^w_i\geq \mathrm{v}_i(\Phi^L\cup(\Omega_i^R\setminus \Phi^R))-\sum_{w\in\Phi^L\cup(\Omega_i^R\setminus \Phi^R)}p^w_i
\end{equation*}
for all $\Phi\in\Omega_i$. The latter is exactly $\Psi^L\cup(\Omega_i^R\setminus \Psi^R)\in\mathrm{D}_i(\mathbf{p}_i)$.

(ii) For each right-side agent $i\in I^R$ and  $\mathbf{p}_i\in \mathbb{R}^{\Omega_i}$, notice that $\Psi\in \tilde{\mathrm{D}}_i(\mathrm{g}(\mathbf{p}_i))$ is equivalent to
\begin{equation*}
\mathrm{v}_i(\Omega_i\setminus\Psi)+\sum_{w\in\Psi}p^w_i\geq \mathrm{v}_i(\Omega_i\setminus\Phi)+\sum_{w\in\Phi}p^w_i
\end{equation*}
for all $\Phi\in\Omega_i$, which is equivalent to
\begin{equation*}
\mathrm{v}_i(\Omega_i\setminus\Psi)-\sum_{w\in\Omega_i\setminus\Psi}p^w_i\geq \mathrm{v}_i(\Omega_i\setminus\Phi)-\sum_{w\in\Omega_i\setminus\Phi}p^w_i
\end{equation*}
for all $\Phi\in\Omega_i$. The latter is exactly $\Omega_i\setminus \Psi\in\mathrm{D}_i(\mathbf{p}_i)$.
\end{proof}

We are now ready to prove Lemma \ref{lma_condition}.

(i) Fix a right-side agent $i\in I^R$ and two prices $\mathbf{p}_i,\mathbf{q}_i\in\mathbb{R}^{\Omega_i}$ with $\mathbf{p}_i\geq\mathbf{q}_i$. For any $\Psi\in \tilde{\mathrm{D}}_i(\mathbf{q}_i)$, by Lemma \ref{lma_relation} we have $\Omega_i\setminus \Psi\in\mathrm{D}_i(\mathrm{g}(\mathbf{q}_i))$.

Since $\mathrm{g}(\mathbf{q}_i)=-\mathbf{q}_i\geq-\mathbf{p}_i=\mathrm{g}(\mathbf{p}_i)$, gross complementarity (of the form (iii) in Lemma \ref{lma_RGC}) implies that there exists $\Phi\in\mathrm{D}_i(\mathrm{g}(\mathbf{p}_i))$ such that $\{w\in\Omega_i\setminus \Psi|p^w_i=q^w_i\}\subseteq\Phi$.

By Lemma \ref{lma_relation}, we have $\Omega_i\setminus \Phi\in\tilde{\mathrm{D}}_i(\mathbf{p}_i)$. For any $w\in\Omega_i\setminus \Phi$ satisfying $p^w_i=q^w_i$, we have $w\in\Psi$ since $w\in\Omega_i\setminus \Psi$ implies $w\in\Phi$ (according to $\{w\in\Omega_i\setminus \Psi|p^w_i=q^w_i\}\subseteq\Phi$).
Hence, we are done according to (iv)$\Leftrightarrow$(v) of Lemma \ref{lma_RGC}.

(ii) Fix a central agent $i\in I^M$ and two price vectors $\mathbf{p}_i,\mathbf{q}_i\in\mathbb{R}^{\Omega_i}$ such that $\mathbf{p}_i\geq\mathbf{q}_i$ and $|\tilde{\mathrm{D}}_i(\mathbf{p}_i)|=|\tilde{\mathrm{D}}_i(\mathbf{p}_i)|=1$.
Let $\tilde{\mathrm{D}}_i(\mathbf{p}_i)=\{\Phi\}$ and $\tilde{\mathrm{D}}_i(\mathbf{q}_i)=\{\Psi\}$. Let $\hat{\mathbf{p}}_i\in\mathbb{R}^{\Omega_i}$ be the price vector such that $\hat{p}^w_i=p^w_i$ for all $w\in\Omega^L_i$ and $\hat{p}^w_i=q^w_i$ for all $w\in\Omega^R_i$. Select $\hat{\Phi}\in\mathrm{D}_i(\mathrm{g}(\hat{\mathbf{p}}_i))$ in the original market. Let $\epsilon>0$ be a sufficiently small number, and let $\hat{\mathbf{q}}_i,\tilde{\mathbf{p}}_i,\tilde{\mathbf{q}}_i\in\mathbb{R}^{\Omega_i}$ be the price vectors such that
\begin{align}
\hat{q}^w_i=\left\{
\begin{aligned}\label{qhat}
&[\mathrm{g}(\hat{\mathbf{p}}_i)]^w-\epsilon \qquad &\text{ if } w\in\hat{\Phi}, \\
&[\mathrm{g}(\hat{\mathbf{p}}_i)]^w+\epsilon \qquad &\text{ if } w\in\Omega_i\setminus\hat{\Phi},
\end{aligned}
\right.\\
\tilde{p}^w_i=\left\{
\begin{aligned}\label{pperb}
&p_i^w-\epsilon \qquad &\text{ if } w\in\hat{\Phi}^L, \\
&p_i^w+\epsilon \qquad &\text{ if } w\in\Omega^L_i\setminus\hat{\Phi}^L, \\
&p_i^w+\epsilon \qquad &\text{ if } w\in\hat{\Phi}^R, \\
&p_i^w-\epsilon \qquad &\text{ if } w\in\Omega^R_i\setminus\hat{\Phi}^R, 
\end{aligned}
\right.\\
\tilde{q}^w_i=\left\{
\begin{aligned}\label{qperb}
&q_i^w-\epsilon \qquad &\text{ if } w\in\hat{\Phi}^L, \\
&q_i^w+\epsilon \qquad &\text{ if } w\in\Omega^L_i\setminus\hat{\Phi}^L, \\
&q_i^w+\epsilon \qquad &\text{ if } w\in\hat{\Phi}^R, \\
&q_i^w-\epsilon \qquad &\text{ if } w\in\Omega^R_i\setminus\hat{\Phi}^R, 
\end{aligned}
\right.
\end{align}
Notice that $\hat{\Phi}\in\mathrm{D}_i(\mathrm{g}(\hat{\mathbf{p}}_i))$ and (\ref{qhat}) imply $\mathrm{D}_i(\hat{\mathbf{q}}_i)=\{\hat{\Phi}\}$; when $\epsilon$ is sufficiently small, (\ref{pperb}) further implies

(a) $\tilde{\mathrm{D}}_i(\tilde{\mathbf{p}}_i)=\tilde{\mathrm{D}}_i(\mathbf{p}_i)=\{\Phi\}$, which by Lemma \ref{lma_relation} implies $\mathrm{D}_i(\mathrm{g}(\tilde{\mathbf{p}}_i))=\{\Phi^L\cup(\Omega_i^R\setminus \Phi^R)\}$, and

(b) $[\mathrm{g}(\tilde{\mathbf{p}}_i)]^w=\hat{q}_i^w$ for all $w\in\Omega_i^L$, and $[\mathrm{g}(\tilde{\mathbf{p}}_i)]^w\leq\hat{q}_i^w$ for all $w\in\Omega_i^R$, where the equality holds when $p_i^w=q_i^w$,\\
and (\ref{qperb}) further implies

(c) $\tilde{\mathrm{D}}_i(\tilde{\mathbf{q}}_i)=\tilde{\mathrm{D}}_i(\mathbf{q}_i)=\{\Psi\}$, which by Lemma \ref{lma_relation} implies $\mathrm{D}_i(\mathrm{g}(\tilde{\mathbf{q}}_i))=\{\Psi^L\cup(\Omega_i^R\setminus \Psi^R)\}$, and

(d) $[\mathrm{g}(\tilde{\mathbf{q}}_i)]^w=\hat{q}_i^w$ for all $w\in\Omega_i^R$ and $[\mathrm{g}(\tilde{\mathbf{q}}_i)]^w\leq\hat{q}_i^w$ for all $w\in\Omega_i^L$, where the equality holds when $p_i^w=q_i^w$.

According to (a), (b), and $\mathrm{D}_i(\hat{\mathbf{q}}_i)=\{\hat{\Phi}\}$, same-side gross complementarity and cross-side gross substitutability imply $\{w\in\hat{\Phi}^R|p_i^w=q_i^w\}\subseteq\Omega_i^R\setminus \Phi^R$ and $\Phi^L\subseteq\hat{\Phi}^L$.

According to (c), (d), and $\mathrm{D}_i(\hat{\mathbf{q}}_i)=\{\hat{\Phi}\}$, same-side gross complementarity and cross-side gross substitutability imply $\{w\in\hat{\Phi}^L|p_i^w=q_i^w\}\subseteq\Psi^L$ and $\Omega_i^R\setminus\Psi^R\subseteq\hat{\Phi}^R$.

Primitive contracts are gross complements for agent $i$ in the modified market since 

$\bullet$ if $w\in\Phi^L$ satisfies $p^w_i=q^w_i$, then $\Phi^L\subseteq\hat{\Phi}^L$ and $\{w\in\hat{\Phi}^L|p_i^w=q_i^w\}\subseteq\Psi^L$ imply $w\in\Psi^L$, and

$\bullet$ if $w\in\Phi^R$ satisfies $p^w_i=q^w_i$, then $\Omega_i^R\setminus\Psi^R\subseteq\hat{\Phi}^R$ and $\{w\in\hat{\Phi}^R|p_i^w=q_i^w\}\subseteq\Omega_i^R\setminus \Phi^R$ imply $w\in\Psi^R$.
\end{proof}

\begin{proof}[Proof of Lemma \ref{lma_CE}]
Since $(\Psi,\mathbf{p})$ is a competitive equilibrium in the modified market, and each left-side agent $i\in I^L$ has the same demand correspondence in the two markets, we have $\Psi^L_i=\Psi_i\in\mathrm{D}_i(\mathbf{p}_i)=\mathrm{D}_i([\mathrm{g}(\mathbf{p})]_i)$. For each central agent $i\in I^M$, Lemma \ref{lma_relation} and $\Psi_i\in\tilde{\mathrm{D}}_i(\mathbf{p}_i)$ imply $\Psi_i^L\cup(\Omega_i^R\setminus \Psi_i^R)\in\mathrm{D}_i(\mathrm{g}(\mathbf{p}_i))$; and for each right-side agent $i\in I^R$, Lemma \ref{lma_relation} and $\Psi_i\in\tilde{\mathrm{D}}_i(\mathbf{p}_i)$ imply $\Omega_i\setminus \Psi_i\in\mathrm{D}_i(\mathrm{g}(\mathbf{p}_i))$. Hence, $(\Psi^L\cup(\Omega^R\setminus\Psi^R),\mathrm{g}(\mathbf{p}))$ is a competitive equilibrium in the original market.
\end{proof}

\bigskip

\end{document}